\documentclass[11pt]{article}

\usepackage{amsmath, amssymb, amsthm}
\usepackage{hyperref}
\usepackage{geometry}
\usepackage{xcolor}
\usepackage[T1]{fontenc}

\newtheorem{theorem}{Theorem}
\newtheorem{lemma}{Lemma}
\newtheorem{corollary}{Corollary}
\newtheorem{definition}{Definition}

\newcommand{\Pcal}{\mathcal{P}}

\newcommand{\bp}{\mathbf{p}}
\newcommand{\bq}{\mathbf{q}}
\newcommand{\br}{\mathbf{r}}
\newcommand{\bw}{\mathbf{w}}
\newcommand{\bv}{\mathbf{v}}
\newcommand{\bx}{\mathbf{x}}
\newcommand{\by}{\mathbf{y}}
\newcommand{\cP}{{\cal P}}

\newcommand{\cL}{{\cal L}}

\newcommand{\remove}[1]{}

\usepackage{amsmath,amssymb,amsthm}

\title{The Sharma-Mittal Entropy is Subadditive and Supermodular 
on the Majorization Lattice}
\author{Roberto Bruno and Ugo Vaccaro\\
Department of Computer Science, University of Salerno\\
84084 Fisciano (SA), Italy\\
Email: {\tt \{rbruno,uvaccaro\}@unisa.it}}
\date{}

\begin{document}

\maketitle
\begin{abstract}
We prove that Sharma-Mittal entropy is a subadditive and supermodular function on the lattice of 
all $n$-dimensional probability distributions, ordered according to the partial order relation defined by majorization among vectors. Our result unifies and greatly extends analogous results 
presented in the literature for the Shannon
entropy, the Tsallis entropy, and the Rényi entropy.
    \end{abstract}

\section{Introduction}
The mathematical concept of majorization has a rich history, 
with applications across a wide range of disciplines. 
Majorization theory arose  in mathematical economics
\cite{AS}, where it was employed to
rigorously explain the vague notion that the components of a given vector are
``more nearly equal\rq\rq \  than the components of a different vector. Presently, majorization theory
finds applications in many areas, ranging from pure mathematics  to combinatorics
\cite{HA,Man,MO},
from information
and communication theory \cite{cv,CGV, A++, Mu, Sa1, Wi} to thermodynamics 
and quantum theory
\cite{bb,N, NV, S}, from mathematical chemistry \cite{bi} to optimization \cite{da}, among the others.

The quantification of uncertainty and diversity in complex systems relies heavily on generalized entropy measures. While the classical Shannon and Boltzmann-Gibbs entropies assume extensivity and independent subsystems, modern applications in non-extensive statistical mechanics and complex systems demand more flexible frameworks \cite{Gell}. The Sharma-Mittal entropy \cite{SM}, a  versatile two-parameter generalization, might serve this purpose. By decoupling the degree of non-extensivity from the deformation of the probability distribution, it recovers both the R\'enyi and Tsallis entropies as limiting cases \cite{masi}. Several other entropic functionals can be seen as particular cases of 
the Sharma-Mittal entropy \cite{Ta}.
Consequently, the Sharma-Mittal entropy has found applications across diverse domains, including holographic dark energy modeling in cosmology \cite{Sa}, topic modeling evaluation in natural language processing \cite{Ko}, and the algorithmic quantification of generative design variety \cite{A+}. Numerous  other applications of the Sharma-Mittal entropy in various fields of Physics
are discussed in \cite{Beck, Carann, Fi+,  Fra, Fra2,Saeed+++, Gha, Hu,Sa, Kaur,masi2,Maz,Mondaini,Nielsen,Jerin,Rud,scarf,Xuan}, and references quoted therein.

Simultaneously, the mathematical framework of majorization has proven useful for formally ordering probability 
distributions by their relative ``disorder'' or ``concentration.'' In particular, the probability simplex under 
the majorization pre-order forms a lattice---the \textit{majorization lattice}---equipped with well-defined meet
(greatest lower bound) and join (least upper bound) operations \cite{ba,cv}. The interplay between entropy 
measures and the majorization lattice has important physical implications, most notably in Quantum Theory. In the 
context of bipartite entanglement theory, state transformations under Local Operations and Classical Communication 
(LOCC), are strictly governed by majorization
\cite{N, NV}. Recent literature has extensively 
utilized the majorization lattice to model approximate bipartite entanglement transformations \cite{Bo1}, 
probabilistic pure state conversions \cite{De}, and the detection of high-dimensional quantum steering \cite{Ya}. Other applications of the majorization lattice
to Quantum Theory are discussed in the papers \cite{Bo2, Bo3,  LiQi, Ma++}.

Despite several studies, the structural properties of generalized entropies on the majorization lattice remain incompletely charted. Cicalese and Vaccaro \cite{cv} established that the Shannon entropy is 
subadditive and supermodular  on the majorization lattice. Harremo{\"e}s \cite{Har2} greatly simplified 
the proofs of the results in \cite{cv}. Recently, Yadav and Shkel \cite{yadav2025} extended 
the results of \cite{cv} by proving the subadditivity and supermodularity of the  R\'enyi entropy, and Bhatia \textit{et al.}\ \cite{ba} proved 
the subadditivity of the Tsallis entropy. 


In this paper, we  characterize the subadditive, superadditive and supermodular behavior of the Sharma-Mittal entropy $S_{\alpha,\beta}(\mathbf{p})$ on the majorization lattice. Specifically, 
we prove that the Sharma-Mittal entropy $S_{\alpha,\beta}(\mathbf{p})$ is subadditive for $\alpha\geq 0$ and $\beta\geq 1$, 
is superadditive for $\alpha<0$ and $\beta\leq 1$, and  is supermodular for $\alpha> 0$ and $\beta\leq \alpha$.
This provides  a unified treatment and considerable extension of the results in \cite{ba,cv, yadav2025},
since the Sharma-Mittal entropy encompasses all the information measures studied therein.
We also  investigate the regime where $\beta > \alpha$. We demonstrate a fundamental structural breakdown: by constructing explicit counterexamples (specifically evaluated at $\alpha=2$, $\beta=3$), we prove that the Sharma-Mittal entropy generally loses both supermodularity and submodularity when $\beta > \alpha$ and $\alpha>0$.

\section{Preliminaries}
Let 
$\cP_n = \{\bp=(p_1,\dots,p_n): p_i\geq  0,  \sum_{i=1}^n p_i = 1\}$
be the $(n-1)$-dimensional probability simplex.
Throughout this paper, we assume that all  vectors $\bp=(p_1,\dots,p_n)\in \cP_n$ are ordered in non increasing fashion, that is,
$p_1\geq \dots\geq p_n$.
We recall the basics of majorization theory \cite{MO}.

\begin{definition} 
    For any $\bp,\bq \in \cP_n$,  we say that $\bp$ is \textit{majorized} by $\bq$ (equivalently, 
that $\bq$ majorizes $\bp$), denoted by $\mathbf{p} \preceq \mathbf{q}$, if it holds that
\begin{equation}\label{p<q}
         \sum_{i=1}^k p_i \leq \sum_{i=1}^k q_i,  \quad \mbox{ for } \ k=1,\ldots,n.
         \end{equation}
\end{definition}
In the more general case, in which
$\bp$ and $\bq$ have a different number of components, we pad the shorter vectors with an appropriate number 
of 0's and apply the same definition.
The majorization relation $\preceq$ is a partial order relation on $\cP_n, $ that is, $(\cP_n,\preceq)$ is a poset.

\begin{definition}\label{def:schurconvex}
A function $\phi:\cP_{n}\rightarrow \mathbb{R}$ is said to be Schur-convex if for every $\bp,\bq$ $\in$ $\cP_{n}$ satisfying $\bp \preceq \bq$, it holds that 
\begin{align}
 \phi(\bp)\leq \phi(\bq)
\end{align}
The function $\phi$ is  {Schur-concave} if $-\phi$ is Schur-convex.
\end{definition}
We  recall \cite{DP} that a lattice is a quadruple
$⟨\cL,\sqsubseteq,\wedge,\vee⟩$  where $\cL$ is a set, $\sqsubseteq$ is a partial ordering on $\cL$, and for all $a,b\in \cL$ there is a unique greatest lower bound (glb) $a\wedge b$ and a unique least upper bound (lub) 
$a\vee b$. More precisely, $a\wedge b$ and $a\vee b$ are the elements of $\cL$ that satisfy the conditions
$$a\wedge b\sqsubseteq a, a\wedge b\sqsubseteq b, \quad  a\sqsubseteq a\vee b, b\sqsubseteq a\vee b,$$
and for each $c$ and $d$ such that 
$c\sqsubseteq a, c\sqsubseteq b, a\sqsubseteq d, b\sqsubseteq d$ one has that
$$c\sqsubseteq a \wedge b \quad {\hbox{and}}\quad a\vee b\sqsubseteq d.$$

Bapat~\cite{bapat} showed that the partially ordered set  $(\mathcal P_n,\preceq)$ induced by majorization is a  lattice. Cicalese and Vaccaro~\cite{cv} gave explicit constructions for the greatest lower bound (glb) and the least upper bound (lub) of any two elements. We recall their algorithm below.
Given $\bp$ and $\bq$ in $\mathcal{P}_n$, the glb $\bp \wedge \bq=\br =(r_1, \ldots , r_n) $ is given by
\begin{align}
    \sum^{k}_{i=1} r_i = \min\left\{\sum_{i=1}^{k}p_i,\sum_{i=1}^{k}q_i \right\}.\label{eq:glb}
\end{align}
Define a vector $\beta(\bp,\bq)=\bw \in \mathbb{R}^{n}$ such that
\begin{align}
   \sum^{k}_{i=1} w_i=\max\left\{\sum_{i=1}^{k}p_i,\sum_{i=1}^{k}q_i \right\}. \label{eq:lub}
\end{align}
If $\bw \in\mathcal P_n$, then the lub $\bp\vee \bq$ is equal to $\bw$. Otherwise, $\bp \vee \bq$ is obtained by repeatedly replacing each maximal consecutive block of coordinates of $\bw$ that violates the non-increasing order by its average, until the resulting vector lies in $\mathcal P_n$. 
Equivalent methods to obtain $\bp \wedge \bq$ and $\bp \vee \bq$  in terms of Lorents curves are presented in \cite{ cuff, Har2, yadav2025}.

\subsection{Entropies}
Le $\bp \in \cP_n$. We assume that all logarithms in this paper are of base 2.
The Shannon entropy is defined as 
$$H(\bp)=-\sum_{i=1}^np_i\log p_i.$$
The Rényi entropy  \cite{Re} of order $\alpha \in [0,\infty]$,
is defined as
$$
 H_{\alpha}(\mathbf{p})= \frac{1}{1-\alpha} \log\left( \sum_{i=1}^{n} p_i^ {\alpha}\right).
$$
For $\alpha \rightarrow 1$, the Rényi entropy  reduces to the Shannon entropy. The Rényi entropy is Schur‑concave.

For $\alpha\in [0,\infty)$,  the Tsallis entropy \cite{Ts} is defined as
$$T_\alpha(\bp)=\frac{1-\sum_{i=1}^np_i^\alpha}{\alpha-1}. $$
For $\alpha \rightarrow 1$, also the Tsallis  entropy   reduces to the Shannon entropy. 

For $\alpha, \beta\in [0,\infty)$, the Sharma-Mittal  entropy \cite{masi,SM} is defined as 
\begin{equation}\label{eq:Sharma-Mittal}
    S_{\alpha,\beta}(\bp)=\frac{1}{1-\beta}\left[\left 
(\sum_{i=1}^np_i^\alpha\right )^{\frac{1-\beta}{1-\alpha}}-1\right ].
\end{equation}
One can see that $\lim_{\beta\to 1}S_{\alpha,\beta}(\bp)=\ln(2)H_\alpha(\mathbf{p})$
and for $\alpha\neq 1$, it holds that  $\lim_{\beta\to \alpha}S_{\alpha,\beta}(\bp)=T_\alpha(\mathbf{p})$.
From the definition of $H_\alpha(\mathbf{p})$, one can see 
that $\sum_{i=1}^n p_i^\alpha=2^{(1-\alpha)H_\alpha(\bp)}$. Therefore,
one can rewrite the Sharma-Mittal entropy as follows 
\begin{equation}\label{eq:sharma-mittal-renyi}
    S_{\alpha,\beta}(\bp) = \frac{1}{1-\beta} 
    \left[ \left( 2^{(1-\alpha)H_\alpha(\bp)} \right)^{\frac{1-\beta}{1-\alpha}} - 1 \right] = \frac{2^{(1-\beta)H_\alpha(\bp)} - 1}{1-\beta}.
\end{equation}
Equivalently, if one defines the function $\phi_\beta(x)$
as 
\begin{equation}\label{phib}
\phi_\beta(x)=\begin{cases}
\frac{2^{(1-\beta)x}-1}{1-\beta}, & \mbox{ for }\beta\neq 1,\\
\ln(2)x, &\mbox{ for } \beta=1, 
\end{cases}
\end{equation}
then one has
\begin{equation}\label{S=R}
S_{\alpha,\beta}(\bp) =\phi_\beta(H_\alpha(\bp)).
\end{equation}
\section{Subadditivity of the Sharma-Mittal entropy on the majorization lattice}

We first recall the notion of subadditive (resp. superadditive) functions on an abstract lattice.
\begin{definition}
    A real-valued function $\phi:\mathcal{P} \rightarrow \mathbb{R}$ defined on a lattice $(\mathcal{P},\preceq,\wedge,\vee)$ is subadditive (resp. superadditive) if $\forall$ $\bx$, $\by$ $\in$ $\mathcal{P}$, 
    it holds that
    \begin{align}
        \phi(\bx \wedge \by) \leq(\text{resp. } \geq) \ \phi(\bx)+\phi(\by).
    \end{align}
    \end{definition}
    The authors of \cite{cv} proved that the Shannon entropy is subadditive on the majorization lattice
   $(\cP_n,\preceq,\wedge,\vee)$.  This result was later extended to the Tsallis entropy in \cite{ba}. 
    Recently, the authors of \cite{yadav2025} proved the analogous result for the  R{\'e}nyi entropy.

\remove{
\begin{theorem}[\cite{yadav2025}]\label{ya}
For every \(\alpha\in[0,\infty]\) and all \(\mathbf{p},\mathbf{q}\in\Pcal_n\), it holds that
\begin{equation}\label{renyisuba}
H_\alpha(\mathbf{p}\wedge\mathbf{q}) \le H_\alpha(\mathbf{p})+H_\alpha(\mathbf{q}).
\end{equation}
\end{theorem}
}
\remove{
The following result has been proved in \cite{cv}
\begin{theorem}\label{cv}
For  all \(\mathbf{p},\mathbf{q}\in\Pcal_n\), it holds that
\begin{equation}\label{shannonsuba}
H(\mathbf{p}\wedge\mathbf{q}) \le H(\mathbf{p})+H(\mathbf{q}).
\end{equation}
Moreover, it holds that
$$H(\mathbf{p})+H(\mathbf{q})\leq H(\mathbf{p}\wedge\mathbf{q})
+H(\mathbf{p}\vee\mathbf{q}).$$
\end{theorem}
\hfill$\Box$

For \(\alpha\in[0,\infty]\), the Rényi entropy \cite{} is
\[
H_\alpha(\mathbf{p}) = \frac{1}{1-\alpha}\log\sum_{i=1}^n p_i^\alpha.
\]
  It is Schur‑concave.  
The following result is proved in~\cite{yadav2025}:
\begin{theorem}\label{ya}
For every \(\alpha\in[0,\infty]\) and all \(\mathbf{p},\mathbf{q}\in\Pcal_n\), it holds that
\begin{equation}\label{renyisuba}
H_\alpha(\mathbf{p}\wedge\mathbf{q}) \le H_\alpha(\mathbf{p})+H_\alpha(\mathbf{q}).
\end{equation}
Moreover, for every $\alpha\in\{0\}\cup [1,\infty]$, it holds that
$$H_\alpha(\mathbf{p})+H_\alpha(\mathbf{q})\leq H_\alpha(\mathbf{p}\wedge\mathbf{q})
+H_\alpha(\mathbf{p}\vee\mathbf{q}).$$
\end{theorem}
\hfill$\Box$

\noindent
For $\alpha\in [0,\infty)$, the Tsallis entropy is defined as
$$T_\alpha(\bp)=\frac{1-\sum_{i=1}^np_i^\alpha}{\alpha-1}. $$

The following result is proved in~\cite{ba}:
\begin{theorem}
For every \(\alpha\in(1,\infty)\) and all \(\mathbf{p},\mathbf{q}\in\Pcal_n\), it holds that
\[
T_\alpha(\mathbf{p}\wedge\mathbf{q}) \le T_\alpha(\mathbf{p})+T_\alpha(\mathbf{q}).
\]
Moreover, for every $\alpha\in(0,\infty)$, it holds that
$$T_\alpha(\mathbf{p})+T_\alpha(\mathbf{q})\leq T_\alpha(\mathbf{p}\wedge\mathbf{q})
+T_\alpha(\mathbf{p}\vee\mathbf{q}).$$
\end{theorem}
\hfill$\Box$
}

In this section, we extend the above results to the more general case of Sharma-Mittal entropies 
$S_{\alpha,\beta}$. To this end, we first establish some preliminary results. In the following, whenever $\alpha < 0$, we assume that $p_i > 0$ for all $i$, since otherwise the Sharma-Mittal entropy $S_{\alpha,\beta}(\bp)$ would not be well-defined.

\begin{lemma}\label{lemma:schur_convexity}
    The Sharma-Mittal entropy 
    $$S_{\alpha,\beta}(\bp)=\frac{1}{1-\beta}\left[\left 
(\sum_{i=1}^np_i^\alpha\right )^{\frac{1-\beta}{1-\alpha}}-1\right ]$$
    is Schur-concave for $\alpha\geq 0$ and Schur-convex for $\alpha<0$.
\end{lemma}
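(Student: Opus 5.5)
The plan is to write $S_{\alpha,\beta}(\bp)=g(F_\alpha(\bp))$ with $F_\alpha(\bp)=\sum_{i=1}^n p_i^\alpha$ and
$$g(t)=\frac{1}{1-\beta}\left[t^{\frac{1-\beta}{1-\alpha}}-1\right]$$
(for $\alpha\neq 1$), and then combine two facts. First, the Schur behaviour of the symmetric sum $F_\alpha$. Second, the monotonicity of the outer function $g$ on $(0,\infty)$. The standard fact we invoke \cite{MO} is that if $f$ is convex (resp. concave) on an interval, then $\sum_i f(p_i)$ is Schur-convex (resp. Schur-concave). Composing a Schur-convex function with an increasing function keeps it Schur-convex, and composing with a decreasing function makes it Schur-concave.

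Step 1 classifies $f(x)=x^\alpha$ using $f''(x)=\alpha(\alpha-1)x^{\alpha-2}$. For $\alpha>1$ and for $\alpha<0$, $f$ is convex on $(0,\infty)$, so $F_\alpha$ is Schur-convex. For $0<\alpha<1$, $f$ is concave, so $F_\alpha$ is Schur-concave. For $\alpha<0$ we work on strictly positive vectors, as the paper assumes. Majorization only compares sorted vectors with the same total, so the usual Hardy--Littlewood--Pólya/Karamata argument applies on the open interval $(0,1]$.

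Step 2 computes the derivative of the outer function:
$$g'(t)=\frac{1}{1-\alpha}\,t^{\frac{1-\beta}{1-\alpha}-1}.$$
The factor $1-\beta$ cancels, and this also covers $\beta=1$, where $g(t)=\ln t/(1-\alpha)$ in natural-log units. So $g$ is increasing when $\alpha<1$ and decreasing when $\alpha>1$, for every $\beta$.

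Combining the two steps gives three cases:
\begin{itemize}
\item[(a)] For $\alpha>1$, $F_\alpha$ is Schur-convex and $g$ is decreasing, so $S_{\alpha,\beta}$ is Schur-concave.
\item[(b)] For $0<\alpha<1$, $F_\alpha$ is Schur-concave and $g$ is increasing, so $S_{\alpha,\beta}$ is Schur-concave.
\item[(c)] For $\alpha<0$, $F_\alpha$ is Schur-convex and $g$ is increasing, since $1-\alpha>0$, so $S_{\alpha,\beta}$ is Schur-convex.
\end{itemize}

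The boundary cases are handled separately. For $\alpha=0$, $F_0$ counts the nonzero entries of $\bp$. This count can only decrease when moving up in majorization, because $\bp\preceq\bq$ forces $\bq$ to have at least as few nonzero entries. Since $g$ is increasing, Schur-concavity follows. For $\alpha=1$, we use the representation \eqref{S=R} with the limit function $\phi_\beta(H(\bp))$. Here $\phi_\beta$ is increasing and the Shannon entropy is Schur-concave.

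The main obstacle is bookkeeping rather than a hard estimate. Two points need care. One is the sign of $g'$, which depends only on $\alpha$; the exponent $(1-\beta)/(1-\alpha)$ and the prefactor $1/(1-\beta)$ must not be treated separately, since their signs combine. The other is the degenerate cases $\alpha\in\{0,1\}$ and $\beta=1$, which need either a limiting argument or the rewriting \eqref{S=R}. An alternative uniform route is via \eqref{S=R}: $\phi_\beta$ is increasing for every $\beta$. For $\alpha\ge0$ we could then cite the Schur-concavity of $H_\alpha$. For $\alpha<0$, $H_\alpha=\frac{1}{1-\alpha}\log F_\alpha$ is Schur-convex by Step 1, and this gives the claim in all cases at once.
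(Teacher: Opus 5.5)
Your proof is correct, but it takes a different route from the paper's. The paper applies the Schur--Ostrowski criterion directly to $S_{\alpha,\beta}$. It computes $\partial S_{\alpha,\beta}/\partial p_i=\frac{\alpha}{1-\alpha}A^{(\alpha-\beta)/(1-\alpha)}p_i^{\alpha-1}$ with $A=\sum_k p_k^\alpha$, and drops the positive factor $A^{(\alpha-\beta)/(1-\alpha)}$. It then finds the sign of $(p_i-p_j)\frac{\alpha}{1-\alpha}(p_i^{\alpha-1}-p_j^{\alpha-1})$ separately for $0<\alpha<1$, $\alpha>1$ and $\alpha<0$. The cases $\alpha\in\{0,1\}$ are handled by evaluating that expression or taking its limit. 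You instead write $S_{\alpha,\beta}=g\circ F_\alpha$ and combine Karamata/Hardy--Littlewood--P\'olya for the separable sum with monotone composition.

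The sign bookkeeping is ultimately the same. The paper's derivative is just the chain rule $g'(F_\alpha)\,\partial F_\alpha/\partial p_i$, and your remark that the sign of $g'$ depends only on $\alpha$ is the paper's remark that $A^{(\alpha-\beta)/(1-\alpha)}>0$. Your packaging is nonetheless somewhat more robust, in two ways.
\begin{itemize}
\item Schur--Ostrowski requires continuous differentiability. This fails at points of $\cP_n$ with a zero coordinate when $\alpha<1$, because $p_i^{\alpha-1}$ blows up. Concavity or convexity of $x^\alpha$ on $[0,\infty)$ covers those points directly.
\item Your support-counting argument for $\alpha=0$ is a genuine proof. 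The paper's observation that the Schur--Ostrowski expression vanishes at $\alpha=0$ tacitly treats $\sum_i p_i^0$ as smooth, which it is not under the convention $0^0=0$.
\end{itemize}
What the paper's route buys is a single derivative formula from which every case is read off.

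The one step you should spell out is the support claim. If $\bp$ has $m$ nonzero entries and $\bp\preceq\bq$, then $1=\sum_{i\le m}p_i\le\sum_{i\le m}q_i$, so $q_i=0$ for all $i>m$. Your alternative uniform route through $S_{\alpha,\beta}(\bp)=\phi_\beta(H_\alpha(\bp))$, with $\phi_\beta$ increasing for every $\beta$, is also valid.
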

\begin{proof}
To determine the Schur-concavity or Schur-convexity of the Sharma-Mittal entropy $S_{\alpha,\beta}(\bp)$, we apply the Schur-Ostrowski criterion \cite[Ch. 3, Thm. A.4]{MO}: a function $\phi:D\subset\mathbb{R}^n\to \mathbb{R}$ is Schur-concave (resp. Schur-convex) if and only if $\phi$ is symmetric (that is, it is invariant under any permutation of its arguments), continuously differentiable, and for all $i\neq j$ and $z=(z_1,\dots,z_n)\in D$ it holds that 
\begin{equation*}
    (z_i - z_j)\left( \frac{\partial \phi}{\partial p_i} - \frac{\partial \phi}{\partial p_j} \right) \leq 0 \quad (\text{resp. } \geq 0).
\end{equation*}
Thus, because $S_{\alpha,\beta}(\bp)$ is symmetric and continuously differentiable, it follows that it is Schur-concave (resp. Schur-convex) if and only if for all $i \neq j$:
\begin{equation}\label{eq:nec_suff_cond}
    (p_i - p_j)\left( \frac{\partial S_{\alpha,\beta}}{\partial p_i} - \frac{\partial S_{\alpha,\beta}}{\partial p_j} \right) \leq 0 \quad (\text{resp. } \geq 0).
\end{equation}
To this end, let us compute the partial derivative of $S_{\alpha,\beta}(\mathbf{p})$ with respect to a generic component $p_i$ of $\bp$. For the sake of notation, let $A=\sum_{k=1}^n p_k^\alpha$. It follows:
$$\frac{\partial S_{\alpha,\beta}}{\partial p_i} = \frac{1}{1-\beta} \left[ \frac{1-\beta}{1-\alpha} A^{\frac{1-\beta}{1-\alpha} - 1} \cdot \alpha p_i^{\alpha-1} \right] = \frac{\alpha}{1-\alpha} A^{\frac{\alpha-\beta}{1-\alpha}} p_i^{\alpha-1}.$$
Thus, we need to study the behaviour of the following expression:
\begin{equation}\label{eq:diff}
    (p_i-p_j)\left(\frac{\partial S_{\alpha,\beta}}{\partial p_i} - \frac{\partial S_{\alpha,\beta}}{\partial p_j}\right) =(p_i-p_j) \frac{\alpha}{1-\alpha} A^{\frac{\alpha-\beta}{1-\alpha}} \left( p_i^{\alpha-1} - p_j^{\alpha-1} \right).
\end{equation}
Since the term $A^{\frac{\alpha-\beta}{1-\alpha}}$ is always strictly positive, it does not affect the sign of the expression \eqref{eq:diff}. Thus, we just need to study the sign of the function
$$
    g(\alpha)=(p_i-p_j) \frac{\alpha}{1-\alpha}  \left( p_i^{\alpha-1} - p_j^{\alpha-1} \right).
$$
We now examine how $g(\alpha)$ behaves with respect to $\alpha$. We assume $p_i\neq p_j$ to avoid trivialities.

\noindent
\textit{Case:} $\alpha\geq 0$. We consider two subcases: $0<\alpha<1$ and $\alpha>1$. For the boundary cases $\alpha=0$ and $\alpha=1$, we observe that $S_{\alpha,\beta}$ is Schur-concave. In fact, for $\alpha=0$, it holds trivially since $g(0)=0$, while for $\alpha=1$, it holds because $\lim_{\alpha\to 1} g(\alpha)=-(p_i-p_j)(\ln p_i-\ln p_j)\leq0$.

Let $0<\alpha<1$. In this case, since the function $x\to x^{\alpha-1}$ is strictly decreasing, the factors
\begin{equation*}
    (p_i-p_j)\quad\text{and}\quad (p_i^{\alpha-1} - p_j^{\alpha-1})
\end{equation*}
always have opposite signs. 
Hence,
\begin{equation*}
    (p_i-p_j)(p_i^{\alpha-1} - p_j^{\alpha-1})< 0.
\end{equation*}
Therefore, since the term $\frac{\alpha}{1-\alpha}> 0$, it follows that $g(\alpha)< 0$.

Let $\alpha>1$. Since the function $x\to x^{\alpha-1}$ is strictly increasing,
the factors
\begin{equation*}
    (p_i-p_j)\quad\text{and}\quad (p_i^{\alpha-1} - p_j^{\alpha-1})
\end{equation*}
always have the same sign. 
Hence,
\begin{equation*}
    (p_i-p_j)(p_i^{\alpha-1} - p_j^{\alpha-1})> 0.
\end{equation*}
From this, since the term $\frac{\alpha}{1-\alpha}<0$, it follows that $g(\alpha)< 0$.

\noindent
\textit{Case}: $\alpha<0$. Since the function $x\to x^{\alpha-1}$ is strictly decreasing, the factors
\begin{equation*}
    (p_i-p_j)\quad\text{and}\quad (p_i^{\alpha-1} - p_j^{\alpha-1})
\end{equation*}
always have opposite signs. 
Thus, since
\begin{equation*}
    (p_i-p_j)(p_i^{\alpha-1} - p_j^{\alpha-1})< 0,
\end{equation*}
and the term $\frac{\alpha}{1-\alpha}< 0$, it follows that $g(\alpha)>0$. 
\end{proof}

The following technical Lemma is needed in the proof of Theorem \ref{thm:sub_superadd}, that represents the main result
of this section.
\begin{lemma}\label{lemma:almost_additivity}
Let $\bp,\bq \in \cP_n$, and let $\bp\otimes\bq$ denote the tensor product of $\bp$ and $\bq$, that is, 
$$
    (\bp\otimes\bq)_{i,j} = p_iq_j \quad\forall i,j=1,\dots,n.
$$
Then, for any $\alpha,\beta\in\mathbb{R}$, it holds that
\begin{equation}
    S_{\alpha,\beta}(\bp\otimes\bq)=S_{\alpha,\beta}(\bp)+S_{\alpha,\beta}(\bq)+(1-\beta)S_{\alpha,\beta}(\bp)S_{\alpha,\beta}(\bq).
\end{equation}
\end{lemma}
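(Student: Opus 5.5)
The plan is to use the multiplicativity of the power sum under tensor products, together with the representation \eqref{eq:sharma-mittal-renyi}. First I would observe that
\begin{equation*}
\sum_{i,j}(p_iq_j)^\alpha=\Big(\sum_i p_i^\alpha\Big)\Big(\sum_j q_j^\alpha\Big).
\end{equation*}
Hence, for $\alpha\neq 1$, raising both sides to the power $\frac{1-\beta}{1-\alpha}$ gives the identity $X_{\bp\otimes\bq}=X_\bp X_\bq$, where we write
\begin{equation*}
X_\bp=\Big(\sum_i p_i^\alpha\Big)^{\frac{1-\beta}{1-\alpha}}.
\end{equation*}
Equivalently, in terms of \eqref{eq:sharma-mittal-renyi}, the Rényi entropy is additive, $H_\alpha(\bp\otimes\bq)=H_\alpha(\bp)+H_\alpha(\bq)$, so $2^{(1-\beta)H_\alpha(\bp\otimes\bq)}=2^{(1-\beta)H_\alpha(\bp)}\,2^{(1-\beta)H_\alpha(\bq)}$.

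Next, for $\beta\neq1$, I would invert the definition to write $X_\bp=1+(1-\beta)S_{\alpha,\beta}(\bp)$, and similarly for $\bq$ and for $\bp\otimes\bq$. Substituting these into $X_{\bp\otimes\bq}=X_\bp X_\bq$ yields
\begin{equation*}
1+(1-\beta)S_{\alpha,\beta}(\bp\otimes\bq)=\big(1+(1-\beta)S_{\alpha,\beta}(\bp)\big)\big(1+(1-\beta)S_{\alpha,\beta}(\bq)\big).
\end{equation*}
Expanding the right-hand side, cancelling the $1$, and dividing by $1-\beta$ gives exactly the claimed identity. This is the routine core of the argument.

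The only genuine point of care is the degenerate parameters, which I would handle by limits or by the convention $\phi_\beta$ in \eqref{phib}. For $\beta=1$ we have $S_{\alpha,1}=\ln(2)H_\alpha$, and the identity reduces to additivity of $H_\alpha$ (the cross term vanishes). For $\alpha=1$, $S_{1,\beta}=\phi_\beta(H(\bp))$, and the same computation goes through using additivity of the Shannon entropy on products, since the multiplicative step only needs $2^{(1-\beta)H(\bp\otimes\bq)}=2^{(1-\beta)H(\bp)}2^{(1-\beta)H(\bq)}$. For $\alpha<0$, we retain the standing assumption of strictly positive components, which is preserved by $\bp\otimes\bq$.

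Finally, I would remark that $\bp\otimes\bq$ has $n^2$ components rather than $n$. This is harmless, because $S_{\alpha,\beta}$ depends only on the multiset of entries and is well-defined for any length.
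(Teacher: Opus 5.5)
Your proof is correct and follows essentially the same route as the paper: multiplicativity of $\sum_i p_i^\alpha$ under $\otimes$, the substitution $\bigl(\sum_i p_i^\alpha\bigr)^{\frac{1-\beta}{1-\alpha}}=1+(1-\beta)S_{\alpha,\beta}(\bp)$, and expansion of the product. You also treat the case $\alpha=1$, $\beta\neq1$ explicitly via $\phi_\beta(H(\bp))$, which is slightly more careful than the paper; the paper only invokes the limits $\beta\to1$ and $(\alpha,\beta)\to(1,1)$ before restricting to $\alpha,\beta\neq1$.
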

\begin{proof}
    We first observe that as  $\beta\to 1$ and as $(\alpha,\beta)\to(1,1)$, the Sharma-Mittal entropy reduces to the R\'enyi and Shannon entropies, respectively, for which the equality holds. Thus, we can restrict our attention to $\alpha,\beta\in\mathbb{R}\setminus\{1\}$. 
    We observe that from \eqref{eq:Sharma-Mittal}, we have that
    $$1+(1-\beta)S_{\alpha,\beta}(\bp)=\left(\sum_{i=1}^n p_i^\alpha\right)^{\frac{1-\beta}{1-\alpha}}.$$
    Thus, it follows that
    \begin{align}
        1+(1-\beta)S_{\alpha,\beta}(\bp\otimes\bq)&=\left(\sum_{i,j} (p_iq_j)^\alpha\right)^{\frac{1-\beta}{1-\alpha}}\nonumber\\
        &=\left(\sum_{i,j} p_i^\alpha q_j^\alpha\right)^{\frac{1-\beta}{1-\alpha}}\nonumber\\
        &=\left(\left(\sum_{i} p_i^\alpha\right) \left(\sum_{j}q_j^\alpha\right)\right)^{\frac{1-\beta}{1-\alpha}}\nonumber\\
        &=\left(\sum_{i} p_i^\alpha\right)^{\frac{1-\beta}{1-\alpha}}\left(\sum_{j}q_j^\alpha\right)^{\frac{1-\beta}{1-\alpha}}\nonumber\\
        &=\left(1+(1-\beta)S_{\alpha,\beta}(\bp)\right)\left(1+(1-\beta)S_{\alpha,\beta}(\bq)\right)\nonumber\\
        &=1+(1-\beta)S_{\alpha,\beta}(\bp)+(1-\beta)S_{\alpha,\beta}(\bq)+(1-\beta)^2S_{\alpha,\beta}(\bp)S_{\alpha,\beta}(\bq).\label{eq:last_step}
    \end{align}
    Rearranging \eqref{eq:last_step} to isolate $S_{\alpha,\beta}(\bp\otimes\bq)$ and dividing by $(1-\beta)$, we obtain
    \begin{equation*}
        S_{\alpha,\beta}(\bp\otimes\bq)=S_{\alpha,\beta}(\bp)+S_{\alpha,\beta}(\bq)+(1-\beta)S_{\alpha,\beta}(\bp)S_{\alpha,\beta}(\bq),
    \end{equation*}
    which concludes the proof.
\end{proof}

We now present the main result of the section.
\begin{theorem}\label{thm:sub_superadd}
Let $\beta\in \mathbb{R}$, and let $\bp,\bq\in \cP_n$. Then, for $\alpha\geq 0$, it holds that
\begin{equation}
    S_{\alpha,\beta}(\bp\wedge\bq) \leq S_{\alpha,\beta}(\bp)+S_{\alpha,\beta}(\bq)+(1-\beta)S_{\alpha,\beta}(\bp)S_{\alpha,\beta}(\bq),
\end{equation}
and for $\alpha<0$, it holds that
\begin{equation}
    S_{\alpha,\beta}(\bp\wedge\bq) \geq S_{\alpha,\beta}(\bp)+S_{\alpha,\beta}(\bq)+(1-\beta)S_{\alpha,\beta}(\bp)S_{\alpha,\beta}(\bq).
\end{equation}
\remove{
\begin{equation}
    S_{\alpha,\beta}(\bp\wedge\bq) 
    \begin{cases}
        \leq S_{\alpha,\beta}(\bp)+S_{\alpha,\beta}(\bq)+(1-\beta)S_{\alpha,\beta}(\bp)S_{\alpha,\beta}(\bq) & \text{if } \alpha \geq 0, \\[1ex]
        \geq S_{\alpha,\beta}(\bp)+S_{\alpha,\beta}(\bq)+(1-\beta)S_{\alpha,\beta}(\bp)S_{\alpha,\beta}(\bq) & \text{if } \alpha < 0.
    \end{cases}
\end{equation}
}
\end{theorem}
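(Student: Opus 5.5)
The plan is to combine Lemma \ref{lemma:almost_additivity} with Lemma \ref{lemma:schur_convexity}. The right-hand side of the claimed inequality is exactly $S_{\alpha,\beta}(\bp\otimes\bq)$ by Lemma \ref{lemma:almost_additivity}. It therefore suffices to show that $\bp\otimes\bq$, viewed as a probability vector with $n^2$ components (with $\bp\wedge\bq$ padded by zeros to length $n^2$), satisfies
\[
\bp\otimes\bq \preceq \bp\wedge\bq .
\]
Schur-concavity for $\alpha\ge 0$ then gives $S_{\alpha,\beta}(\bp\wedge\bq)\le S_{\alpha,\beta}(\bp\otimes\bq)$, and Schur-convexity for $\alpha<0$ gives the reverse inequality. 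Schur-concavity/convexity of $S_{\alpha,\beta}$ holds on $\cP_m$ for every $m$, and padding with zeros does not change $S_{\alpha,\beta}$ when $\alpha>0$. For $\alpha=0$ and $\alpha<0$ one needs care: with $\alpha<0$ zeros are not allowed. I will handle this below.

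\textbf{Key majorization step.} I first show $\bp\otimes\bq\preceq\bp$ and $\bp\otimes\bq\preceq\bq$. Then, since $\bp\wedge\bq$ is the greatest lower bound, $\bp\otimes\bq\preceq\bp\wedge\bq$ follows. This uses the lattice property on $\cP_{n^2}$, where $\bp,\bq,\bp\wedge\bq$ are padded with zeros; formula \eqref{eq:glb} extended by zeros is still the glb of the padded vectors.

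To show $\bp\otimes\bq\preceq\bp$, observe that $\bp$ padded is obtained from $\bp\otimes\bq$ by merging each block $\{p_iq_j\}_j$ into its sum $p_i$. Merging components (aggregation) always produces a vector that majorizes the original. Concretely, the $k$ largest entries of $\bp\otimes\bq$ lie in at most $k$ distinct rows $i$, and their sum is at most the sum of the corresponding $p_i$, which is at most $\sum_{i\le k}p_i$. The same argument applies to $\bq$.

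\textbf{Handling $\alpha\le 0$ and the boundary.} For $\alpha<0$ all $p_i,q_j>0$, so $\bp\otimes\bq$ has $n^2$ positive entries. However, $\bp\wedge\bq$ padded has zeros, so $S_{\alpha,\beta}$ of the padded vector is undefined. Here I will instead compare on $\cP_n$ directly.

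One option is to use a Karamata/weak-majorization type argument. Note that $\sum_{i,j}(p_iq_j)^\alpha=\sum_i p_i^\alpha\sum_j q_j^\alpha$. Since $S_{\alpha,\beta}=\phi_\beta(H_\alpha)$ with $\phi_\beta$ increasing, it suffices to show that $\sum_k r_k^\alpha\le(\sum_i p_i^\alpha)(\sum_j q_j^\alpha)$ for $\alpha<0$, where $\br=\bp\wedge\bq$. The inequality direction must be checked against the sign of $1/(1-\alpha)$. I expect to handle this by noting that $\br\preceq\bp$ forces $r_n\ge p_n$. Then, for $\alpha<0$, I would bound $\sum_k r_k^\alpha\le n\,r_n^\alpha$ crudely, or, more carefully, view $\br$ as majorized by the merged-block vector through a continuity argument: let the padded zeros be $\epsilon\to0$. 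This fails for $\alpha<0$ because the zero entries blow up.

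So I expect the main obstacle to be the $\alpha<0$ case (and the case $\alpha=0$, where the support size matters). For these cases I would try a direct argument. Since $\br\preceq\bp$, Schur-convexity on $\cP_n$ gives $\sum r_k^\alpha\le\sum p_i^\alpha$. Also $\sum_j q_j^\alpha\ge n\cdot n^{-\alpha}\ge 1$ for $\alpha<0$. Hence $\sum r_k^\alpha\le(\sum p_i^\alpha)(\sum q_j^\alpha)$. This is exactly the inequality on the $A$-values, and it translates through the monotone maps $x\mapsto x^{1/(1-\alpha)}$ and $\phi_\beta$ into the desired inequality.

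Similarly, for $\alpha\ge0$ the same direct route works and avoids padding issues. From $\br\preceq\bp$ we get $H_\alpha(\br)\ge H_\alpha(\bp)$, so this alone is insufficient, and the tensor/merging argument above is the one to use. For $\alpha=0$, $H_0$ counts the support, and $\mathrm{supp}(\br)\le n\le|\mathrm{supp}\,\bp|\cdot|\mathrm{supp}\,\bq|$ fails only if both supports are small. I will check this via the padded argument, since $\alpha=0$ tolerates zeros if we read $0^0=0$.
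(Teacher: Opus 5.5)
\textbf{The case $\alpha\ge 0$.} Here your route is the paper's: aggregation gives $\bp\otimes\bq\preceq\bp$ and $\bp\otimes\bq\preceq\bq$, hence $\bp\otimes\bq\preceq\bp\wedge\bq$; then Schur-concavity and Lemma~\ref{lemma:almost_additivity} finish the argument. Your zero-padding bookkeeping is more careful than the paper's.

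\textbf{The gap: $\alpha<0$.} Write $A(\bx)=\sum_i x_i^\alpha$. Since $S_{\alpha,\beta}=\phi_\beta(H_\alpha)$ with $\phi_\beta$ increasing, and $H_\alpha=\frac{1}{1-\alpha}\log A$ is increasing in $A$ whenever $\alpha<1$, $S_{\alpha,\beta}$ is an increasing function of $A$ for $\alpha<0$. You derive $A(\bp\wedge\bq)\le A(\bp)\le A(\bp)A(\bq)$. This therefore yields $S_{\alpha,\beta}(\bp\wedge\bq)\le S_{\alpha,\beta}(\bp\otimes\bq)$, which is the \emph{reverse} of the inequality asserted for $\alpha<0$. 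The direction check you postponed is exactly where the argument fails: the claim would require $A(\bp\wedge\bq)\ge A(\bp)A(\bq)$.

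\textbf{The $\alpha<0$ half is false.} No repair is possible. Take $\bp=\bq=(1/2,1/2)$, $\alpha=-1$, $\beta=0$, so that $S_{-1,0}(\bx)=\bigl(\sum_i x_i^{-1}\bigr)^{1/2}-1$. Then $\bp\wedge\bq=\bp$ and $S_{-1,0}(\bp)=1$, whereas the right-hand side is $1+1+1=3=S_{-1,0}(\bp\otimes\bq)$.

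More generally, for $\bp=\bq$ the claim reads $S_{\alpha,\beta}(\bp)\,A(\bp)^{\frac{1-\beta}{1-\alpha}}\le 0$. This fails for every $\beta$ once $n\ge 2$, because $A(\bp)\ge n^{1-\alpha}>1$ forces $S_{\alpha,\beta}(\bp)>0$.

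The paper's proof breaks precisely at the step you flagged. For $\alpha<0$ the zero-padded $\bp\wedge\bq$ lies outside the domain (its power sum is infinite), so Schur-convexity on $\cP_{n^2}$ cannot be used to compare it with $\bp\otimes\bq$.

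\textbf{What is actually true for $\alpha<0$.} Your computation proves the correct statement. Note that $\bp\wedge\bq$ has all entries positive, since its last entry is $\max\{p_n,q_n\}$. Schur-convexity on $\cP_n$ then gives the first inequality below, and $S_{\alpha,\beta}(\bp\otimes\bq)-S_{\alpha,\beta}(\bp)=S_{\alpha,\beta}(\bq)\,A(\bp)^{\frac{1-\beta}{1-\alpha}}\ge 0$ gives the second:
\[
S_{\alpha,\beta}(\bp\wedge\bq)\le\min\{S_{\alpha,\beta}(\bp),S_{\alpha,\beta}(\bq)\}\le S_{\alpha,\beta}(\bp\otimes\bq).
\]
This is the same direction ($\le$) as for $\alpha\ge 0$. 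The same example also refutes Corollary~\ref{cor:super}.
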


\begin{proof}
    First, we observe that $\bp$ and $\bq$ are aggregations of the tensor product $\bp\otimes\bq$, that is, $p_i=\sum_{j} p_iq_j$ for all $i$ and $q_j=\sum_i p_iq_j$ for all $j$. In the context of majorization, an aggregation of a distribution always majorizes the original one. Consequently, we have
    \begin{equation*}
        \bp\otimes\bq\preceq \bp\quad \text{and} \quad\bp\otimes\bq\preceq\bq.
    \end{equation*}
    Moreover, by definition of the greatest lower bound $\bp\wedge\bq$, it immediately follows that 
    \begin{equation}\label{eq:prod_maj}
        \bp\otimes\bq\preceq\bp \wedge\bq.
    \end{equation}
    Now, from \eqref{eq:prod_maj} and Lemma \ref{lemma:schur_convexity}, we have that for $\alpha\geq 0$, the Sharma-Mittal entropy $S_{\alpha,\beta}$ is Schur-concave, yielding
    \begin{equation}\label{eq:conc}
        S_{\alpha,\beta}(\bp\wedge\bq)\leq  S_{\alpha,\beta}(\bp\otimes\bq).
    \end{equation}
    Conversely, for $\alpha<0$, it is Schur-convex, yielding
    \begin{equation}\label{eq:convex}
        S_{\alpha,\beta}(\bp\wedge\bq)\geq  S_{\alpha,\beta}(\bp\otimes\bq).
    \end{equation}
    Finally, applying Lemma \ref{lemma:almost_additivity} to expand $S_{\alpha,\beta}(\bp\otimes\bq)$ in \eqref{eq:conc} and \eqref{eq:convex} yields the desired inequalities, concluding the proof.
\end{proof}

As a direct consequence of Theorem \ref{thm:sub_superadd}, we obtain the following subadditivity and superadditivity properties for the Sharma-Mittal entropy, depending on the values of the parameters $\alpha$ and $\beta$.
\begin{corollary}\label{cor:sub}
    For every $\alpha\geq 0$ and $\beta\geq 1$, the Sharma-Mittal entropy $S_{\alpha,\beta}$ is subadditive on the majorization lattice, that is, for any $\bp,\bq\in\cP_n$, it holds that
    \begin{equation}\label{eq:rsuba}
     S_{\alpha,\beta}(\mathbf{p}\wedge\mathbf{q}) \le S_{\alpha,\beta}(\mathbf{p})+S_{\alpha,\beta}(\mathbf{q}).  
    \end{equation}
\end{corollary}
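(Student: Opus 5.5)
The plan is to read the corollary off Theorem~\ref{thm:sub_superadd} by controlling the sign of the cross term. Since $\alpha\ge 0$, Theorem~\ref{thm:sub_superadd} gives
\begin{equation*}
S_{\alpha,\beta}(\bp\wedge\bq)\le S_{\alpha,\beta}(\bp)+S_{\alpha,\beta}(\bq)+(1-\beta)S_{\alpha,\beta}(\bp)S_{\alpha,\beta}(\bq).
\end{equation*}
It therefore suffices to show that $(1-\beta)S_{\alpha,\beta}(\bp)S_{\alpha,\beta}(\bq)\le 0$ whenever $\beta\ge 1$ and $\alpha\ge 0$. Because $\beta\ge1$ gives $1-\beta\le 0$, this reduces to showing that $S_{\alpha,\beta}$ is nonnegative on $\cP_n$ for $\alpha\ge 0$.

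For nonnegativity I would use the representation \eqref{S=R}, namely $S_{\alpha,\beta}(\bp)=\phi_\beta(H_\alpha(\bp))$. First, $H_\alpha(\bp)\ge 0$ for $\alpha\ge0$. For $0\le\alpha<1$ this holds because $p_i^\alpha\ge p_i$, so $\sum_i p_i^\alpha\ge 1$. For $\alpha>1$ it holds because $p_i^\alpha\le p_i$, so $\sum_i p_i^\alpha\le 1$ and the prefactor $\frac{1}{1-\alpha}$ is negative. The case $\alpha=1$ is the Shannon entropy, which is nonnegative. Second, $\phi_\beta(x)\ge 0$ for $x\ge 0$. For $\beta>1$, the quantity $2^{(1-\beta)x}\le 1$, so both numerator and denominator of $\phi_\beta(x)$ are nonpositive. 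For $\beta=1$, $\phi_1(x)=\ln(2)x\ge0$.

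Alternatively, Schur-concavity (Lemma~\ref{lemma:schur_convexity}) gives $S_{\alpha,\beta}(\bp)\ge S_{\alpha,\beta}((1,0,\dots,0))$, and the right-hand side is $0$. I expect no real obstacle here. The only care needed is with the boundary cases $\alpha=1$ and $\beta=1$, which are handled by the limiting forms above, and with the convention $0^0$ when $\alpha=0$. Under that convention, $\sum_i p_i^0$ counts the support size, which is at least $1$, so $H_0\ge0$.

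Combining the two steps, the cross term is $\le 0$, and dropping it yields \eqref{eq:rsuba}.
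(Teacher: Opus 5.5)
Your proposal is correct and follows the paper's proof. Both apply Theorem~\ref{thm:sub_superadd} and drop the cross term $(1-\beta)S_{\alpha,\beta}(\bp)S_{\alpha,\beta}(\bq)\le 0$, using $\beta\ge 1$ and the nonnegativity of $S_{\alpha,\beta}$. The only difference is that the paper merely asserts nonnegativity, while you justify it, either via $H_\alpha\ge 0$ together with $\phi_\beta\ge 0$ on $[0,\infty)$, or via Schur-concavity and $S_{\alpha,\beta}((1,0,\dots,0))=0$.
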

\begin{proof}
    It follows from Theorem \ref{thm:sub_superadd} since the Sharma-Mittal entropy is non-negative and for $\beta\geq1$ the term $(1-\beta)S_{\alpha,\beta}(\bp)S_{\alpha,\beta}(\bq)$ is smaller or equal to $0$.
\end{proof}

For $\beta\to 1$, we recover Theorem 1 of \cite{yadav2025}, for $\beta\to \alpha$ we obtain Theorem 3.3  of \cite{ba}, and 
for $(\alpha,\beta)\to (1,1)$, we get Theorem 2 of \cite{cv}.

Although the classical definition of the Sharma-Mittal entropy posits that $\alpha,\beta\geq 0$, we state the following result for
the record.
\begin{corollary}\label{cor:super}
    For every $\alpha< 0$ and $\beta\leq 1$, the Sharma-Mittal entropy $S_{\alpha,\beta}$ is superadditive on the majorization lattice, that is, for any $\bp,\bq\in\cP_n$, it holds that
    \begin{equation}\label{eq:rsupera}
     S_{\alpha,\beta}(\mathbf{p}\wedge\mathbf{q}) \geq S_{\alpha,\beta}(\mathbf{p})+S_{\alpha,\beta}(\mathbf{q}).  
    \end{equation}
\end{corollary}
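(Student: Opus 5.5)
The plan is to combine the second inequality of Theorem~\ref{thm:sub_superadd} with a sign argument, just as Corollary~\ref{cor:sub} was obtained from the first one. For $\alpha<0$, Theorem~\ref{thm:sub_superadd} gives
$$S_{\alpha,\beta}(\bp\wedge\bq)\geq S_{\alpha,\beta}(\bp)+S_{\alpha,\beta}(\bq)+(1-\beta)S_{\alpha,\beta}(\bp)S_{\alpha,\beta}(\bq).$$
It therefore suffices to show that the correction term $(1-\beta)S_{\alpha,\beta}(\bp)S_{\alpha,\beta}(\bq)$ is non-negative when $\beta\leq 1$. Since $1-\beta\geq 0$, this reduces to showing that $S_{\alpha,\beta}(\bx)\geq 0$ for every $\bx\in\cP_n$ (with all components positive, as assumed for $\alpha<0$).

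The only step needing care is this non-negativity claim. In the classical range $\alpha\geq 0$ used in Corollary~\ref{cor:sub} it is standard, but here $\alpha<0$, so I will verify it directly from \eqref{eq:Sharma-Mittal}.

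Since $0<x_i\leq 1$ and $\alpha<0$, each term satisfies $x_i^\alpha\geq 1$. Hence $\sum_i x_i^\alpha\geq n\geq 1$.

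For $\beta<1$, the exponent $\frac{1-\beta}{1-\alpha}$ is positive, because $1-\alpha>1$. So $\left(\sum_i x_i^\alpha\right)^{\frac{1-\beta}{1-\alpha}}\geq 1$, the bracket in \eqref{eq:Sharma-Mittal} is non-negative, and dividing by $1-\beta>0$ keeps it non-negative.

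For $\beta=1$, the correction term vanishes identically, so non-negativity is not needed in this case at all. (For completeness, $S_{\alpha,1}=\ln(2)H_\alpha$ with $H_\alpha(\bx)=\frac{1}{1-\alpha}\log\sum_i x_i^\alpha\geq 0$ by the same bound.)

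With $S_{\alpha,\beta}(\bp),S_{\alpha,\beta}(\bq)\geq 0$ and $1-\beta\geq 0$, the product term is non-negative. Dropping it from the right-hand side of the inequality of Theorem~\ref{thm:sub_superadd} yields \eqref{eq:rsupera}.
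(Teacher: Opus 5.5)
Your reduction is the same as the paper's: the second inequality of Theorem~\ref{thm:sub_superadd} plus the sign of the correction term. Your check that $S_{\alpha,\beta}\geq 0$ for $\alpha<0$ is correct. But that same check shows the statement cannot hold, so there is a genuine gap.

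Take $\bq=\bp$. Since $\bp\wedge\bp=\bp$, the claim reads $S_{\alpha,\beta}(\bp)\geq 2S_{\alpha,\beta}(\bp)$, i.e.\ $S_{\alpha,\beta}(\bp)\leq 0$. Your own bound $\sum_i p_i^\alpha\geq n\geq 2$ gives $S_{\alpha,\beta}(\bp)>0$ whenever $n\geq 2$. Concretely, with $n=2$, $\bp=\bq=(1/2,1/2)$, $\alpha=-1$, $\beta=0$, one has $S_{-1,0}(\bp)=\sqrt{4}-1=1$, and the claim becomes $1\geq 2$.

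The failure is in fact universal. Within the fixed simplex $\cP_n$, Lemma~\ref{lemma:schur_convexity} makes $S_{\alpha,\beta}$ Schur-convex for $\alpha<0$. Also $\bp\wedge\bq\preceq\bp$ and $\bp\wedge\bq\preceq\bq$, and $\bp\wedge\bq$ is strictly positive because its last entry is $\max\{p_n,q_n\}$. Hence $S_{\alpha,\beta}(\bp\wedge\bq)\leq\min\{S_{\alpha,\beta}(\bp),S_{\alpha,\beta}(\bq)\}<S_{\alpha,\beta}(\bp)+S_{\alpha,\beta}(\bq)$ for every admissible pair. In this parameter range $S_{\alpha,\beta}$ is trivially subadditive, not superadditive.

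The step that fails is the one you take on trust: the $\alpha<0$ half of Theorem~\ref{thm:sub_superadd}, specifically the inequality $S_{\alpha,\beta}(\bp\wedge\bq)\geq S_{\alpha,\beta}(\bp\otimes\bq)$.

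\begin{itemize}
\item The two vectors live in different simplices, $\bp\otimes\bq\in\cP_{n^2}$ and $\bp\wedge\bq\in\cP_n$. The majorization $\bp\otimes\bq\preceq\bp\wedge\bq$ only makes sense after padding $\bp\wedge\bq$ with $n^2-n$ zeros.
\item For $\alpha>0$, padding leaves $\sum_i x_i^\alpha$ unchanged, so Schur-concavity legitimately transfers across dimensions.
\item For $\alpha<0$, the padded vector contains entries with $0^\alpha=+\infty$. It lies outside the domain on which Lemma~\ref{lemma:schur_convexity} was proved, and its value is not $S_{\alpha,\beta}(\bp\wedge\bq)$.
\end{itemize}

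The inequality is genuinely false. In the example above, $\bp\otimes\bp=(1/4,1/4,1/4,1/4)$ gives $S_{-1,0}(\bp\otimes\bp)=\sqrt{16}-1=3$, so that step asserts $1\geq 3$. The paper's proof of the corollary is the same two-line argument and inherits the same flaw. No argument about the sign of $(1-\beta)S_{\alpha,\beta}(\bp)S_{\alpha,\beta}(\bq)$ can repair it, because the statement is false for every $n\geq 2$.
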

\begin{proof}
    It follows from Theorem \ref{thm:sub_superadd} since the Sharma-Mittal entropy is non-negative and for $\beta\leq1$ the term $(1-\beta)S_{\alpha,\beta}(\bp)S_{\alpha,\beta}(\bq)$ is greater or equal to $0$.
\end{proof}

\remove{
\begin{theorem}
For every $\alpha\geq 0$ and $\beta\geq 1$, and for any $\bp,\bq\in\cP_n$, it holds that

 \begin{equation}\label{eq:rsuba}
     S_{\alpha,\beta}(\mathbf{p}\wedge\mathbf{q}) \le S_{\alpha,\beta}(\mathbf{p})+S_{\alpha,\beta}(\mathbf{q}).  
     \end{equation}
\end{theorem}
\begin{proof}    
For $\beta\geq 1$, the derivative $\phi'_\beta(x)=(\ln 2)2^{(1-\beta)x}>0$, so $\phi_\beta(x)$ is strictly increasing
on $[0,\infty)$. Applying  $\phi_\beta(x)$ to (\ref{renyisuba}), preserves the inequality, therefore
\begin{equation}\label{r1}
S_{\alpha,\beta}(\mathbf{p}\wedge\mathbf{q})=
\phi_\beta(H_\alpha(\mathbf{p}\wedge\mathbf{q}))\leq \phi_\beta(H_\alpha(\mathbf{p})+H_\alpha(\mathbf{q})).
\end{equation}
We claim that for $\beta\geq 1$ it holds that
\begin{equation}\label{fisuba}
\phi_\beta(x+y)\leq \phi_\beta(x)+\phi_\beta(y).
\end{equation}
For $\beta=1$ this is trivial, since $\phi_\beta(t)=\ln(2)t$.  Assume $\beta>1$, that is, $1-\beta<0$.
Then
$$\phi_\beta(x)=\frac{2^{(1-\beta)x}-1}{1-\beta}= \frac{1-2^{-(\beta-1)x}}{\beta-1}
$$
We have
\begin{align*}
\phi_\beta(x)+\phi_\beta(y)-\phi_\beta(x+y)&=\frac{1-2^{-(\beta-1)x}
+1-2^{-(\beta-1)y}-(1-2^{-(\beta-1)(x+y)})}{\beta-1}\\
&=\frac{1-2^{-(\beta-1)x}
-2^{-(\beta-1)y}+ 2^{-(\beta-1)(x+y)}}{\beta-1}\\
&=\frac{(1-2^{-(\beta-1)x})(1-2^{-(\beta-1)y})}{\beta-1}.
\end{align*}
Since $\beta-1>0$ and each factor in the numerator is nonnegative, we obtain (\ref{fisuba}).
Now, by applying (\ref{fisuba}) with $x=H_\alpha(\bp)$ and $y=H_\alpha(\bq)$, we have
\begin{equation}\label{fisuba2}
\phi_\beta(H_\alpha(\bp)+H_\alpha(\bq))\leq \phi_\beta(H_\alpha(\bp))+\phi_\beta(H_\alpha(\bq)).
\end{equation}
Combining (\ref{r1}) with (\ref{fisuba2}), and recalling that 
$S_{\alpha,\beta}(\bp) =\phi_\beta(H_\alpha(\bp))$, 
 the desired inequality (\ref{eq:rsuba}) follows.
\end{proof}
}

\section{Supermodularity  of the Sharma-Mittal entropy on the majorization lattice}
We  recall the definition of  supermodular  functions on an abstract lattice.
\begin{definition}
    A real-valued function $\phi:\mathcal{P} \rightarrow \mathbb{R}$ defined on a lattice $(\mathcal{P},\preceq,\wedge,\vee)$ is supermodular  if $\forall$ $\bx$, $\by$ $\in$ $\mathcal{P}$, 
    it holds that
    \begin{align}
        \phi(\bx)+\phi(\by) \leq \phi(\bx \wedge \by)+\phi(\bx\vee \by).
    \end{align}
    \end{definition}

    The paper  \cite{cv} proved supermodularity for the Shannon entropy,  \cite{yadav2025} proved supermodularity 
    for Rényi and Tsallis entropy.
In this section, we extend these results to the more general case of Sharma-Mittal entropies.

\remove{
\begin{theorem}\label{thm:sper}
Let $\alpha \ge 1$ and $\beta\le 1$. Then the Sharma--Mittal entropy
\[
S_{\alpha,\beta}(\bp)=\phi_\beta(H_\alpha(\bp))
\]
is supermodular on the majorization lattice, that is, for any $\mathbf{p},\mathbf{q}\in \cP_n$, it holds that
\[
S_{\alpha,\beta}(\bp)+S_{\alpha,\beta}(\bq)
\le
S_{\alpha,\beta}(\bp\wedge \bq)+S_{\alpha,\beta}(\bp\vee \bq).
\]
\end{theorem}
\begin{proof}
Let
\[
\bw=\bp\wedge \bq,
\qquad
\bv=\bp\vee \bq,
\]
and define
\[
a=H_\alpha(\bp),
\qquad
b=H_\alpha(\bq),
\qquad
c=H_\alpha(\bw),
\qquad
d=H_\alpha(\bv).
\]
Because R\'enyi entropy $H_\alpha$ is Schur-concave and
\[
\bw\preceq \bp,\; \bq\preceq \bv,
\]
it follows
\[
c\ge a,
\qquad
c\ge b,
\qquad
a\ge d,
\qquad
b\ge d.
\]
Moreover, without loss of generality, assuming $a\geq b$, yields
\begin{equation}\label{eq:1}
c\ge a\ge b\ge d.
\end{equation}
Since $\alpha\ge1$, from Theorem 2 of \cite{yadav2025} the R\'enyi entropy satisfies the supermodularity property, yielding:
\begin{equation}\label{eq:2}
c+d\ge a+b.
\end{equation}
Let us define 
\[
\delta=(c-a)\ge0.
\]
From (\ref{eq:2}), it holds that
\[
d-b\ge -\delta.
\]
Hence there exists $\varepsilon\ge0$ such that
\begin{equation}\label{eq:3}
d=b-\delta+\varepsilon.
\end{equation}
Therefore
\[
(c,d)
=
(a+\delta,\ b-\delta+\varepsilon).
\]
Define the function $F(t)$ as
\[
F(t)=\phi_\beta(a+t)+\phi_\beta(b-t),
\qquad
0\le t\le \delta.
\]
where $\phi_\beta$ is defined in (\ref{phib}).
We first check that the argument $b-t\geq b-\delta$ remains non negative, therefore it
stays within the domain of $\phi_\beta.$ We have that $b-\delta=b-(c-a)=a+b-c.$
From Corollary \ref{cor:sub}, since the Sharma-Mittal entropy reduces to the R\'enyi entropy for $\beta\to 1$, it follows that the R\'enyi entropy is subadditive on the majorization lattice for $\alpha\geq 0$. Thus, we get that $c=H_\alpha(\bp\wedge \bq)\leq H_\alpha(\bp)+H_\alpha(\bq)=a+b.$
Therefore $c\leq a+b$ ensures $b-t\geq 0$ for $t\in[0,\delta].$

Because $\beta\le1$, we have that  $\phi_\beta''(x)=(\ln 2)^2(1-\beta)2^{(1-\beta)x}\ge0$. Therefore, 
the function $\phi_\beta$ is convex, and its derivative $\phi_\beta'$ is increasing.
Differentiating $F(t)$ yields
\[
F'(t)
=
\phi_\beta'(a+t)-\phi_\beta'(b-t).
\]
From the chain of inequalities in (\ref{eq:1}), it follows that
\[
a+t\ge b-t,
\]
and since $\phi_\beta'$ is increasing, we obtain that
\[
F'(t)\ge0.
\]
Thus, the function $F$ is increasing.
Evaluating $F(t)$ at $t=\delta$ and $t=0$ gives
\begin{equation}\label{eq:4}
F(\delta)=\phi_\beta(a+\delta)+\phi_\beta(b-\delta)
\ge
\phi_\beta(a)+\phi_\beta(b)=F(0).
\end{equation}
Finally, since $\phi_\beta$ is increasing and $\varepsilon\ge0$, we obtain the inequality
\begin{equation}\label{ineq}
\phi_\beta(b-\delta+\varepsilon)
\ge
\phi_\beta(b-\delta).
\end{equation}
Combining (\ref{ineq}) with (\ref{eq:4}),
we get
\[
\phi_\beta(c)+\phi_\beta(d)
=
\phi_\beta(a+\delta)+\phi_\beta(b-\delta+\varepsilon)
\ge
\phi_\beta(a)+\phi_\beta(b).
\]
Recalling the definitions of $a,b,c, d$, and (\ref{S=R}) we obtain
\[
S_{\alpha,\beta}(\bw)+S_{\alpha,\beta}(\bv)
\ge
S_{\alpha,\beta}(\bp)+S_{\alpha,\beta}(\bq),
\]
which demonstrates that  $S_{\alpha,\beta}$ is supermodular.

\end{proof}

\begin{theorem}
    Let $0<\alpha < 1$ and $\beta\le \alpha$. Then the Sharma--Mittal entropy
    is supermodular on the majorization lattice
    \remove{, that is, for any $\mathbf{p},\mathbf{q}\in \cP_n$, it holds that
\[
S_{\alpha,\beta}(\bp)+S_{\alpha,\beta}(\bq)
\le
S_{\alpha,\beta}(\bp\wedge \bq)+S_{\alpha,\beta}(\bp\vee \bq).
\]}
\end{theorem}
\begin{proof}
Let
\[
\bw=\bp\wedge \bq,
\qquad
\bv=\bp\vee \bq,
\]
and define the function $g_\alpha(\bp):\cP_n\to \mathbb{R}$ as
\begin{equation}
    g_\alpha(\bp)=\sum_{i=1}^n p_i^\alpha.
\end{equation}
Recalling the definition of the Tsallis entropy of order $\alpha\in(0,\infty)$
$$
T_\alpha(\bp)=\frac{1-\sum_{i=1}^np_i^\alpha}{\alpha-1},$$
it follows that 
\begin{equation}
    g_\alpha(\bp)=(1-\alpha)T_\alpha(\bp)+1.
\end{equation}
Thus, since $1-\alpha\geq 0$, and since the Tsallis entropy is Schur-concave and subadditive on the majorization lattice \cite{ba}, and supermodular for $\alpha\in[0,\infty)$ \cite{yadav2025}, we have that  $g_\alpha(\bp)$ is also Schur-concave, subadditive, and supermodular. Moreover, define the function $h_{\alpha,\beta}:\mathbb{R^+}\to \mathbb{R}$ as
\begin{equation}\label{eq:h_def}
    h_{\alpha,\beta}(x)=\frac{1}{1-\beta}\left(x^{\frac{1-\beta}{1-\alpha}}-1\right).
\end{equation}
We can express the Sharma-Mittal entropy as a function of $g_\alpha(\bp)$ as follows:
\begin{equation}\label{eq:sharma_m_as_tsa}
    S_{\alpha,\beta}(\bp)=h_{\alpha,\beta}(g_\alpha(\bp)).
\end{equation}

We observe that since $\alpha\in(0,1)$ and $\beta\leq \alpha$, the exponent $\frac{1-\beta}{1-\alpha}\geq1$, and both the first derivative $h'_{\alpha,\beta}(x)=\frac{1}{1-\alpha}x^{\frac{1-\beta}{1-\alpha}-1}$ and the second derivative $h''_{\alpha,\beta}(x)=\frac{\frac{1-\beta}{1-\alpha}-1}{1-\alpha}x^{\frac{1-\beta}{1-\alpha}-2}$ are non negative. Thus, both the function $h(x)$ and its derivative $h'(x)$ are increasing in $x$.

\smallskip
Let 
\[
a=g_\alpha(\bp),
\qquad
b=g_\alpha(\bq),
\qquad
c=g_\alpha(\bw),
\qquad
d=g_\alpha(\bv).
\]
Because $g_\alpha$ is Schur-concave and
\[
\bw\preceq \bp,\; \bq\preceq \bv,
\]
it follows
\[
c\ge a,
\qquad
c\ge b,
\qquad
a\ge d,
\qquad
b\ge d.
\]
Moreover, without loss of generality, assuming $a\geq b$, yields
\begin{equation}\label{eq:order}
c\ge a\ge b\ge d.
\end{equation}
Now, since $g_\alpha$ is supermodular on the majorization lattice, we have that
\begin{equation}\label{eq:g_supermod}
    a+b\leq c+d.
\end{equation}
Let us define 
\[
\delta=(c-a)\ge0.
\]
From (\ref{eq:g_supermod}), it holds that
\[
d-b\ge -\delta.
\]
Hence, there exists $\varepsilon\ge0$ such that
\begin{equation}\label{eq:d}
d=b-\delta+\varepsilon.
\end{equation}
Thus, we have that $c=a+\delta$ and $d=b-\delta+\varepsilon$.

Define the function $G(t)$ as
\[
G(t)=h_{\alpha,\beta}(a+t)+h_{\alpha,\beta}(b-t),
\qquad
0\le t\le \delta,
\]
where $h_{\alpha,\beta}$ is defined in (\ref{eq:h_def}). First, we observe that the argument $b-t$ is always non negative for $t\in[0,\delta]$. This holds because $g_\alpha$ is subadditive and, therefore, since $g_\alpha(\bp\wedge\bq)=c\leq a+b=g_\alpha(\bp)+g_\alpha(\bq)$, we have $b-t\geq b-\delta=a+b-c\geq0$ for $t\in[0,\delta]$.

Let 
$$
    G'(t)=h'_{\alpha,\beta}(a+t)-h'_{\alpha,\beta}(b-t)
$$
be the derivative of $G(t)$.
Since, from the chain of inequalities in (\ref{eq:order}), we have that
\[
a+t\ge b-t,
\]
and since $h'_{\alpha,\beta}$ is increasing, it follows that
\[
G'(t)\ge0.
\]
Thus, the function $G$ is increasing and, evaluating $G(t)$ at $t=\delta$ and $t=0$ we obtain

\begin{equation}\label{eq:G}
    G(\delta)=h_{\alpha,\beta}(a+\delta)+h_{\alpha,\beta}(b-\delta)\geq h_{\alpha,\beta}(a)+h_{\alpha,\beta}(b)=G(0).
\end{equation}
Moreover, since $h_{\alpha,\beta}$ is increasing and $\varepsilon\geq 0$, we have that
\begin{equation}\label{eq:b+eps}
    h_{\alpha,\beta}(b-\delta+\varepsilon)\geq h_{\alpha,\beta}(b-\delta).
\end{equation}
Combining \eqref{eq:b+eps} with \eqref{eq:G}, we finally get
\begin{equation*}
    h_{\alpha,\beta}(c)+h_{\alpha,\beta}(d)=h_{\alpha,\beta}(a+\delta)+h_{\alpha,\beta}(b-\delta+\varepsilon)\geq h_{\alpha,\beta}(a)+h_{\alpha,\beta}(b).
\end{equation*}
Recalling the definitions of $a,b,c$ and $d$, and \eqref{eq:sharma_m_as_tsa} we obtain
\[
S_{\alpha,\beta}(\bw)+S_{\alpha,\beta}(\bv)
\ge
S_{\alpha,\beta}(\bp)+S_{\alpha,\beta}(\bq),
\]
which demonstrates that  $S_{\alpha,\beta}$ is supermodular also in the range $0<\alpha < 1$ and $\beta\le \alpha$.
\end{proof}
}

\begin{theorem}\label{thm:sper}
Let $\alpha >0 $ and $\beta\le \alpha$. Then the Sharma--Mittal entropy
    is supermodular on the majorization lattice, that is, for any $\mathbf{p},\mathbf{q}\in \cP_n$, it holds that
\[
S_{\alpha,\beta}(\bp)+S_{\alpha,\beta}(\bq)
\le
S_{\alpha,\beta}(\bp\wedge \bq)+S_{\alpha,\beta}(\bp\vee \bq).
\]
    
\end{theorem}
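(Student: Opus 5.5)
We write $S_{\alpha,\beta}=h_{\alpha,\beta}\circ g_\alpha$, where $g_\alpha(\bp)=\sum_i p_i^\alpha$ and
$$h_{\alpha,\beta}(x)=\frac{1}{1-\beta}\left(x^{\frac{1-\beta}{1-\alpha}}-1\right)$$
(with the usual limiting interpretations at $\beta=1$ or $\alpha=1$). The plan is to transfer supermodularity of the Tsallis entropy, proved in \cite{yadav2025} for all $\alpha>0$, through $h_{\alpha,\beta}$, using a majorization (Karamata-type) argument on four reals.

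Let $\bw=\bp\wedge\bq$, $\bv=\bp\vee\bq$, and set $a=T_\alpha(\bp)$, $b=T_\alpha(\bq)$, $c=T_\alpha(\bw)$, $d=T_\alpha(\bv)$. By Lemma \ref{lemma:schur_convexity} (Tsallis is the case $\beta=\alpha$), $T_\alpha$ is Schur-concave. Since $\bw\preceq\bp,\bq\preceq\bv$, and assuming without loss of generality $a\ge b$, we get
$$c\ge a\ge b\ge d.$$
Supermodularity of Tsallis gives $c+d\ge a+b$. Moreover, $S_{\alpha,\beta}(\bp)=\psi(T_\alpha(\bp))$ with
$$\psi(t)=h_{\alpha,\beta}\bigl(1+(1-\alpha)t\bigr)=\frac{\bigl(1+(1-\alpha)t\bigr)^{\frac{1-\beta}{1-\alpha}}-1}{1-\beta}.$$
This $\psi$ is the standard map taking Tsallis to Sharma--Mittal.

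The key computation is
$$\psi'(t)=\bigl(1+(1-\alpha)t\bigr)^{\frac{\alpha-\beta}{1-\alpha}}>0,\qquad \psi''(t)=(\alpha-\beta)\bigl(1+(1-\alpha)t\bigr)^{\frac{\alpha-\beta}{1-\alpha}-1}.$$
The base $1+(1-\alpha)t=\sum_ip_i^\alpha$ is positive, so for $\beta\le\alpha$ the function $\psi$ is increasing and convex on the relevant range, regardless of whether $\alpha<1$ or $\alpha>1$. This is what makes the condition $\beta\le\alpha$ natural, and the sign bookkeeping here, uniform in $\alpha$, is the step I expect to need the most care. The degenerate cases $\alpha=1$ and $\beta=1$ are handled by continuity or by the same formula with $\psi(t)=\ln(2)H$ in terms of logarithms.

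For the final step, write $c=a+\delta$ with $\delta\ge0$. Then $d\ge b-\delta$, so $d=b-\delta+\varepsilon$ with $\varepsilon\ge0$. Define $F(t)=\psi(a+t)+\psi(b-t)$ for $t\in[0,\delta]$. Then
$$F'(t)=\psi'(a+t)-\psi'(b-t)\ge0,$$
since $a+t\ge b-t$ and $\psi'$ is increasing. Hence
$$\psi(c)+\psi(b-\delta)\ge\psi(a)+\psi(b).$$
Monotonicity of $\psi$ then gives $\psi(d)\ge\psi(b-\delta)$, which yields
$$S_{\alpha,\beta}(\bw)+S_{\alpha,\beta}(\bv)\ge S_{\alpha,\beta}(\bp)+S_{\alpha,\beta}(\bq).$$

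One must check that $b-t$ stays in the domain of $\psi$, where the base is positive. Since $b-\delta\ge d$ lies between $d$ and $b$, and both values correspond to actual distributions, the whole interval $[d,b]$ lies in the domain. This avoids needing subadditivity.
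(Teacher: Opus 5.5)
Your proof is correct. It uses the same basic strategy as the paper's proof: an increasing convex outer map applied to a Schur-concave, supermodular inner functional, followed by the comparison $F(t)=\psi(a+t)+\psi(b-t)$. Your different choice of inner functional simplifies the argument.

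The paper writes $S_{\alpha,\beta}=h_{\alpha,\beta}\circ g_\alpha$ with $g_\alpha(\bp)=\sum_i p_i^\alpha=1+(1-\alpha)T_\alpha(\bp)$. Because this affine relation reverses orientation at $\alpha=1$, the paper needs three cases:
\begin{itemize}
\item For $0<\alpha<1$, it uses $h_{\alpha,\beta}$ increasing convex and $g_\alpha$ supermodular.
\item For $\alpha>1$, it uses $h_{\alpha,\beta}$ decreasing convex and $g_\alpha$ Schur-convex and submodular. This requires a modified argument with $G(t)=h_{\alpha,\beta}(a-t)+h_{\alpha,\beta}(b+t)$, which is increasing only on $[\frac{a-b}{2},\delta]$, combined with the symmetry $G(a-b)=G(0)$.
\item For $\alpha=1$, it works separately with $\phi_\beta\circ H$.
\end{itemize}
The paper's route works directly with the power sum and the Shannon entropy, at the cost of this case split.

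Composing with $T_\alpha$ absorbs the sign flip, so your $\psi$ is increasing and convex for every $\alpha\neq1$ at once. Your identity $\psi''(t)=(\alpha-\beta)\bigl(1+(1-\alpha)t\bigr)^{\frac{\alpha-\beta}{1-\alpha}-1}$ also makes the role of the hypothesis $\beta\le\alpha$ transparent. Your domain check is cleaner than the paper's appeal to subadditivity of $g_\alpha$: you note $[b-\delta,b]\subseteq[d,b]$ with both endpoints realized by actual distributions. Both proofs rest on the same external input, namely supermodularity of $T_\alpha$ for all $\alpha>0$ from \cite{yadav2025} (or of the Shannon entropy at $\alpha=1$).

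One clarification for the degenerate case, since the phrase ``$\psi(t)=\ln(2)H$'' is garbled. At $\alpha=1$ the outer map is the limit $\psi(t)=\frac{e^{(1-\beta)t}-1}{1-\beta}$ applied to $T_1=\ln(2)H$. It satisfies $\psi''(t)=(1-\beta)e^{(1-\beta)t}\ge0$ and reduces to $\psi(t)=t$ when $\beta=1$. If you prefer the continuity route, let $\alpha'\downarrow1$; this keeps $\beta\le1<\alpha'$ inside the range where you have already proved the inequality.
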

\begin{proof}
    Let
\[
\bw=\bp\wedge \bq,
\qquad
\bv=\bp\vee \bq,
\]
and define the function $g_\alpha(\bp):\cP_n\to \mathbb{R}$ as
\begin{equation}
    g_\alpha(\bp)=\sum_{i=1}^n p_i^\alpha.
\end{equation}
Recalling the definition of the Tsallis entropy of order $\alpha\in(0,\infty)$
$$
T_\alpha(\bp)=\frac{1-\sum_{i=1}^np_i^\alpha}{\alpha-1},$$
it follows that 
\begin{equation}
    g_\alpha(\bp)=(1-\alpha)T_\alpha(\bp)+1.
\end{equation}
Moreover, define the function $h_{\alpha,\beta}:\mathbb{R^+}\to \mathbb{R}$ as
\begin{equation}\label{eq:h_def}
    h_{\alpha,\beta}(x)=\frac{1}{1-\beta}\left(x^{\frac{1-\beta}{1-\alpha}}-1\right).
\end{equation}
We can express the Sharma-Mittal entropy as a function of $g_\alpha(\bp)$ as follows:
\begin{equation}\label{eq:sharma_m_as_tsa}
    S_{\alpha,\beta}(\bp)=h_{\alpha,\beta}(g_\alpha(\bp)).
\end{equation}

We now proceed by considering three distinct cases depending on the value of $\alpha$.

\noindent
\textbf{Case:} $0<\alpha<1$. In this case, since $1-\alpha> 0$, and since the Tsallis entropy is Schur-concave and subadditive on the majorization lattice \cite{ba}, and supermodular for $\alpha\in[0,\infty)$ \cite{yadav2025}, we have that  the function $g_\alpha(\bp)$ is also Schur-concave, subadditive and supermodular. 

Moreover, since $\alpha\in(0,1)$ and $\beta\leq \alpha$, the exponent $\frac{1-\beta}{1-\alpha}$
in (\ref{eq:h_def}) is greater or equal to $1$, and both the first derivative $h'_{\alpha,\beta}(x)=\frac{1}{1-\alpha}x^{\frac{1-\beta}{1-\alpha}-1}$ and the second derivative $h''_{\alpha,\beta}(x)=\frac{\frac{1-\beta}{1-\alpha}-1}{1-\alpha}x^{\frac{1-\beta}{1-\alpha}-2}$ are non negative. Thus, both the function $h_{\alpha,\beta}(x)$ and its derivative $h'_{\alpha,\beta}(x)$ are increasing in $x$.

\smallskip
Let 
\[
a=g_\alpha(\bp),
\qquad
b=g_\alpha(\bq),
\qquad
c=g_\alpha(\bw),
\qquad
d=g_\alpha(\bv).
\]
Because $g_\alpha$ is Schur-concave and
\[
\bw\preceq \bp,\; \bq\preceq \bv,
\]
it follows
\[
c\ge a,
\qquad
c\ge b,
\qquad
a\ge d,
\qquad
b\ge d.
\]
Moreover, without loss of generality, assuming $a\geq b$, yields
\begin{equation}\label{eq:order}
c\ge a\ge b\ge d.
\end{equation}
Now, since $g_\alpha$ is supermodular on the majorization lattice, we have that
\begin{equation}\label{eq:g_supermod}
    a+b\leq c+d.
\end{equation}
Let us define 
\[
\delta=(c-a)\ge0.
\]
From (\ref{eq:g_supermod}), it holds that
\[
d-b\ge -\delta.
\]
Hence, there exists $\varepsilon\ge0$ such that
\begin{equation}\label{eq:d}
d=b-\delta+\varepsilon.
\end{equation}
Thus, we have that $c=a+\delta$ and $d=b-\delta+\varepsilon$.

Define the function $G(t)$ as
\[
G(t)=h_{\alpha,\beta}(a+t)+h_{\alpha,\beta}(b-t),
\qquad
0\le t\le \delta,
\]
where $h_{\alpha,\beta}$ is defined in (\ref{eq:h_def}). First, we observe that the argument $b-t$ is always non negative for $t\in[0,\delta]$. This holds because $g_\alpha$ is subadditive and, therefore, since $g_\alpha(\bp\wedge\bq)=c\leq a+b=g_\alpha(\bp)+g_\alpha(\bq)$, we have $b-t\geq b-\delta=a+b-c\geq0$ for $t\in[0,\delta]$.

Let 
$$
    G'(t)=h'_{\alpha,\beta}(a+t)-h'_{\alpha,\beta}(b-t)
$$
be the derivative of $G(t)$.
Since, from the chain of inequalities in (\ref{eq:order}), we have that
\[
a+t\ge b-t,
\]
and since $h'_{\alpha,\beta}$ is increasing, it follows that
\[
G'(t)\ge0.
\]
Thus, the function $G$ is increasing and, evaluating $G(t)$ at $t=\delta$ and $t=0$ we obtain

\begin{equation}\label{eq:G}
    G(\delta)=h_{\alpha,\beta}(a+\delta)+h_{\alpha,\beta}(b-\delta)\geq h_{\alpha,\beta}(a)+h_{\alpha,\beta}(b)=G(0).
\end{equation}
Moreover, since $h_{\alpha,\beta}$ is increasing and $\varepsilon\geq 0$, we have that
\begin{equation}\label{eq:b+eps}
    h_{\alpha,\beta}(b-\delta+\varepsilon)\geq h_{\alpha,\beta}(b-\delta).
\end{equation}
Combining \eqref{eq:b+eps} with \eqref{eq:G}, we finally get
\begin{equation*}
    h_{\alpha,\beta}(c)+h_{\alpha,\beta}(d)=h_{\alpha,\beta}(a+\delta)+h_{\alpha,\beta}(b-\delta+\varepsilon)\geq h_{\alpha,\beta}(a)+h_{\alpha,\beta}(b).
\end{equation*}
Recalling the definitions of $a,b,c$ and $d$, and \eqref{eq:sharma_m_as_tsa} we obtain
\[
S_{\alpha,\beta}(\bw)+S_{\alpha,\beta}(\bv)
\ge
S_{\alpha,\beta}(\bp)+S_{\alpha,\beta}(\bq),
\]
which demonstrates that  $S_{\alpha,\beta}$ is supermodular in the range $0<\alpha < 1$ and $\beta\le \alpha$.

\smallskip
\noindent
\textbf{Case:} $\alpha>1$.  In this case, since $1-\alpha< 0$, we have that the function $g_\alpha(\bp)$ is Schur-convex and submodular on the majorization lattice. Moreover, since $\alpha>1$ and $\beta\leq \alpha$, we have that the first derivative $h'_{\alpha,\beta}(x)=\frac{1}{1-\alpha}x^{\frac{1-\beta}{1-\alpha}-1}$ is negative and the second derivative $h''_{\alpha,\beta}(x)=\frac{\frac{1-\beta}{1-\alpha}-1}{1-\alpha}x^{\frac{1-\beta}{1-\alpha}-2}$ is positive. Thus, the function $h(x)$ is convex and decreasing in $x$, and its derivative $h'(x)$ is increasing in $x$.
Let 
\[
a=g_\alpha(\bp),
\qquad
b=g_\alpha(\bq),
\qquad
c=g_\alpha(\bw),
\qquad
d=g_\alpha(\bv).
\]
Because $g_\alpha$ is Schur-convex and
\[
\bw\preceq \bp,\; \bq\preceq \bv,
\]
it follows
\[
c\leq a,
\qquad
c\leq b,
\qquad
a\leq d,
\qquad
b\leq d.
\]
Moreover, without loss of generality, assuming $a\geq b$, yields
\begin{equation}\label{eq:order_convex}
c\leq b\leq a\leq d.
\end{equation}
Since $g_\alpha$ is submodular on the majorization lattice, we have that
\begin{equation}\label{eq:g_submodular}
    a+b\geq c+d.
\end{equation}
Let us define 
\[
\delta=(a-c)\ge0.
\]
From (\ref{eq:g_submodular}), it holds that
\[
d-b\leq \delta.
\]
Hence, there exists $\varepsilon\ge0$ such that
\begin{equation}\label{eq:d_convex}
d=b+\delta-\varepsilon.
\end{equation}
Thus, we have that $c=a-\delta$ and $d=b+\delta-\varepsilon$.
Define the function $G(t)$ as
\[
G(t)=h_{\alpha,\beta}(a-t)+h_{\alpha,\beta}(b+t),
\qquad
0\le t\le \delta,
\]
where $h_{\alpha,\beta}$ is defined in (\ref{eq:h_def}). First, we observe that the argument $a-t\geq a-\delta=c$ is always non negative for $t\in[0,\delta]$.
Let 
$$
    G'(t)=-h'_{\alpha,\beta}(a-t)+h'_{\alpha,\beta}(b+t)
$$
be the derivative of $G(t)$. We observe that $G'(t)\geq 0$ for all $t$ which satisfy
$$
    h'_{\alpha,\beta}(b+t)\geq h'_{\alpha,\beta}(a-t).
$$
But since $h'$ is increasing, this implies that $G'(t)\geq 0$ for all $t$ such that
$$
    t\geq \frac{a-b}{2}.
$$
Thus, the function $G$ is increasing for $t\in[\frac{a-b}{2},\delta]$. Moreover, evaluating $G(t)$ at $t=a-b\leq a-c=\delta$, we have that
\begin{equation}\label{eq:g(a-b)}
    G(a-b)=h_{\alpha,\beta}(a-(a-b))+h_{\alpha,\beta}(b+(a-b))=h_{\alpha,\beta}(b)+h_{\alpha,\beta}(a)=G(0).
\end{equation}
Therefore, from \eqref{eq:g(a-b)}, we have that $G(0)=G(a-b)$, and since the function $G$ is increasing for $t\in[\frac{a-b}{2},\delta]$, it follows that
\begin{equation}\label{eq:G_convex}
    G(\delta)=h_{\alpha,\beta}(a-\delta)+h_{\alpha,\beta}(b+\delta)\geq h_{\alpha,\beta}(a)+h_{\alpha,\beta}(b)=G(0).
\end{equation}
Moreover, since $h_{\alpha,\beta}$ is decreasing and $\varepsilon\geq 0$, we have that
\begin{equation}\label{eq:b-eps}
    h_{\alpha,\beta}(b+\delta-\varepsilon)\geq h_{\alpha,\beta}(b+\delta).
\end{equation}
Combining \eqref{eq:b-eps} with \eqref{eq:G_convex}, we finally get
\begin{equation*}
    h_{\alpha,\beta}(c)+h_{\alpha,\beta}(d)=h_{\alpha,\beta}(a-\delta)+h_{\alpha,\beta}(b+\delta-\varepsilon)\geq h_{\alpha,\beta}(a)+h_{\alpha,\beta}(b).
\end{equation*}
Recalling the definitions of $a,b,c$ and $d$, and \eqref{eq:sharma_m_as_tsa} we obtain
\[
S_{\alpha,\beta}(\bw)+S_{\alpha,\beta}(\bv)
\ge
S_{\alpha,\beta}(\bp)+S_{\alpha,\beta}(\bq),
\]
which demonstrates that  $S_{\alpha,\beta}$ is supermodular in the range $\alpha>1$ and $\beta\le \alpha$.

\smallskip
\noindent
\textbf{Case:} $\alpha=1$. We recall that the Sharma-Mittal entropy $S_{\alpha,\beta}(\bp)$ can be expressed as a function of the R\'enyi entropy $H_\alpha(\bp)$ as shown in \eqref{S=R}. Thus, as $\alpha\to1$, the R\'enyi entropy converges to the Shannon entropy $H(p)$, yielding the following representation:
\begin{equation}\label{eq:S_1}
    S_{1,\beta}(\bp) =\phi_\beta(H(\bp)),
\end{equation}
where the function $\phi_\beta(x)$ is defined as:
\begin{equation}\label{eq:phi}
    \phi_\beta(x)=\begin{cases}
\frac{2^{(1-\beta)x}-1}{1-\beta}, & \mbox{ for }\beta\neq 1,\\
\ln(2)x, &\mbox{ for } \beta=1.
\end{cases}
\end{equation}
Given that the Shannon entropy $H(\bp)$ is Schur-concave, subadditive and supermodular on the majorization lattice \cite{cv}, and that, for $\beta\leq\alpha=1$, the first derivative $\phi'_\beta(x)$ and second derivative $\phi''_\beta$ of the function $\phi_\beta(x)$ are non negative, we can proceed following the same idea of the Case $\alpha\in(0,1)$.
Let
\[
a=H(\bp),
\qquad
b=H(\bq),
\qquad
c=H(\bw),
\qquad
d=H(\bv).
\]
Because Shannon entropy $H(\bp)$ is Schur-concave and
\[
\bw\preceq \bp,\; \bq\preceq \bv,
\]
it follows
\[
c\ge a,
\qquad
c\ge b,
\qquad
a\ge d,
\qquad
b\ge d.
\]
Moreover, without loss of generality, assuming $a\geq b$, yields
\begin{equation}\label{eq:1}
c\ge a\ge b\ge d.
\end{equation}
Since the Shannon entropy satisfies the supermodularity property, we have
\begin{equation}\label{eq:2}
c+d\ge a+b.
\end{equation}
Let us define 
\[
\delta=(c-a)\ge0.
\]
From (\ref{eq:2}), it holds that
\[
d-b\ge -\delta.
\]
Hence there exists $\varepsilon\ge0$ such that
\begin{equation}\label{eq:3}
d=b-\delta+\varepsilon.
\end{equation}
Therefore
\[
(c,d)
=
(a+\delta,\ b-\delta+\varepsilon).
\]
Define the function $F(t)$ as
\[
F(t)=\phi_\beta(a+t)+\phi_\beta(b-t),
\qquad
0\le t\le \delta.
\]
where $\phi_\beta$ is defined in (\ref{eq:phi}).
We first check that the argument $b-t\geq b-\delta$ remains non negative, therefore it
stays within the domain of $\phi_\beta.$
Since the Shannon entropy is subadditive on the majorization lattice, we get that $c=H(\bp\wedge \bq)\leq H(\bp)+H(\bq)=a+b.$
Therefore $c\leq a+b$ ensures $b-t\geq b-\delta=a+b-c\geq 0$ for $t\in[0,\delta].$

Because $\beta\le\alpha=1$, we have that  $\phi_\beta''(x)=(\ln 2)^2(1-\beta)2^{(1-\beta)x}\ge0$. Therefore, 
the function $\phi_\beta$ is convex, and its derivative $\phi_\beta'$ is increasing.
Differentiating $F(t)$ yields
\[
F'(t)
=
\phi_\beta'(a+t)-\phi_\beta'(b-t).
\]
From the chain of inequalities in (\ref{eq:1}), it follows that
\[
a+t\ge b-t,
\]
and since $\phi_\beta'$ is increasing, we obtain that
\[
F'(t)\ge0.
\]
Thus, the function $F(t)$ is increasing in $t$.
Evaluating $F(t)$ at $t=\delta$ and $t=0$ gives
\begin{equation}\label{eq:4}
F(\delta)=\phi_\beta(a+\delta)+\phi_\beta(b-\delta)
\ge
\phi_\beta(a)+\phi_\beta(b)=F(0).
\end{equation}
Finally, since $\phi_\beta$ is increasing and $\varepsilon\ge0$, we obtain the inequality
\begin{equation}\label{ineq}
\phi_\beta(b-\delta+\varepsilon)
\ge
\phi_\beta(b-\delta).
\end{equation}
Combining (\ref{ineq}) with (\ref{eq:4}),
we get
\[
\phi_\beta(c)+\phi_\beta(d)
=
\phi_\beta(a+\delta)+\phi_\beta(b-\delta+\varepsilon)
\ge
\phi_\beta(a)+\phi_\beta(b).
\]
Recalling the definitions of $a,b,c, d$, and (\ref{eq:S_1}) we obtain
\[
S_{1,\beta}(\bw)+S_{1,\beta}(\bv)
\ge
S_{1,\beta}(\bp)+S_{1,\beta}(\bq),
\]
which demonstrates that  $S_{\alpha,\beta}$ is also supermodular for $\alpha=1$ and $\beta\le \alpha$.

\end{proof}

\remove{
We notice that the authors of \cite{ba} claim that the Tsallis entropy is submodular on the majorization lattice. 
However, their claim is incorrect. Our Theorem \ref{thm:sper} and
the fact that $T_\alpha(\mathbf{p})=\lim_{\beta\to \alpha}S_{\alpha,\beta}(\bp)$ allows us to state the
correct claim that the  Tsallis entropy is, indeed, supermodular on the majorization lattice for $\alpha>0$. Through Theorem \ref{thm:sper}, we also recover the results of \cite{yadav2025}. In fact, as $\beta\to1$, the Sharma-Mittal entropy reduces to the R\'enyi entropy of order $\alpha$, showing the the R\'enyi entropy is supermodular on the majorization lattice for $\alpha\geq 1$.
Finally, as a direct consequence of Corollary \ref{cor:super}, the Sharma-Mittal entropy is supermodular on the majorization lattice also  for $\alpha<0$ and $\beta\leq 1$.
}

\section{The behavior of $S_{\alpha,\beta}$ for  $\beta>\alpha$}
In this section, we show  that the Sharma-Mittal entropy $S_{\alpha,\beta}(\bp)$ is generally \textit{neither supermodular nor submodular} on the majorization lattice for $\beta > \alpha$, and $\alpha>0$. To this purpose,
we  construct two explicit counterexamples by finding one pair of probability distributions that breaks supermodularity and another pair that breaks submodularity.
Recall the definition of the Sharma-Mittal entropy:
\begin{equation}
    S_{\alpha, \beta}(\mathbf{p}) = \frac{1}{1-\beta} \left[ \left( \sum_{i=1}^n p_i^\alpha \right)^{\frac{1-\beta}{1-\alpha}} - 1 \right].
\end{equation}

For our counterexamples, let us fix $\alpha = 2$ and $\beta = 3$. This choice  simplifies the entropy formula to:
\begin{equation}
    S_{2,3}(\mathbf{p}) = \frac{1}{-2} \left[ \left( \sum_{i=1}^n p_i^2 \right)^{\frac{-2}{-1}} - 1 \right] = \frac{1}{2} \left( 1 - \left( \sum_{i=1}^n p_i^2 \right)^2 \right)
\end{equation}

We also recall that in the majorization lattice, given two probability distributions $\mathbf{p}$ and $\mathbf{q}$ ordered in a non-increasing fashion, their least upper bound $\mathbf{p} \vee \mathbf{q}$ and greatest lower bound $\mathbf{p} \wedge \mathbf{q}$ are constructed by taking the unique distributions whose cumulative sums correspond, respectively, to the component-wise maximum and minimum of the cumulative sums of $\mathbf{p}$ and $\mathbf{q}$ (see \eqref{eq:lub} and \eqref{eq:glb} for further details).

We can now present the two counterexamples in the following subsections.

\subsection{Counterexample 1: $S_{\alpha, \beta}(\mathbf{p})$ is not supermodular for $\beta>\alpha$}

In order to disprove supermodularity, we must show an instance, i.e., a pair of probability distributions $\mathbf{p}$ and $\mathbf{q}$, where the property fails, that is, $$S_{2,3}(\mathbf{p} \vee \mathbf{q}) + S_{2,3}(\mathbf{p} \wedge \mathbf{q}) < S_{2,3}(\mathbf{p}) + S_{2,3}(\mathbf{q}).$$

For this purpose, let $n=4$ and consider the following distributions:
\begin{equation*}
    \mathbf{p} = (0.5, 0.3, 0.1, 0.1)\quad\text{and }\quad\mathbf{q} = (0.4, 0.4, 0.2, 0.0).
\end{equation*}
One can verify that in the majorization lattice their least upper bound and greatest lower bound are, respectively:
\begin{equation*}
    \mathbf{p} \vee \mathbf{q} = (0.5, 0.3, 0.2, 0.0)\quad\text{and }\quad\mathbf{p} \wedge \mathbf{q} = (0.4, 0.4, 0.1, 0.1).
\end{equation*}
Let us evaluate their Sharma-Mittal entropy $S_{\alpha,\beta}$ for $\alpha=2$ and $\beta=3>1$:
\begin{align*}
    S_{2,3}(\mathbf{p})&=\frac{1}{2}-\frac{1}{2}\left(0.5^2+0.3^2+0.1^2+0.1^2\right)^2={0.4352}\\
    S_{2,3}(\mathbf{q})&=\frac{1}{2}-\frac{1}{2}\left(0.4^2+0.4^2+0.2^2+0.0^2\right)^2={0.4352}\\
    S_{2,3}(\mathbf{p} \vee \mathbf{q})&=\frac{1}{2}-\frac{1}{2}\left(0.5^2+0.3^2+0.2^2+0.0^2\right)^2={0.4278}\\
    S_{2,3}(\mathbf{p} \wedge \mathbf{q})&=\frac{1}{2}-\frac{1}{2}\left(0.4^2+0.4^2+0.1^2+0.1^2\right)^2={0.4422}
\end{align*}
Then, one can see that for the chosen pair of probability distributions, the following inequality holds
\begin{equation*}
    S_{2,3}(\mathbf{p})+S_{2,3}(\mathbf{q})={0.8704}>{0.8700}=S_{2,3}(\mathbf{p} \vee \mathbf{q})+S_{2,3}(\mathbf{p} \wedge \mathbf{q}),
\end{equation*}
which shows that $S_{2,3}$ is \textit{not} supermodular.

\remove{
\begin{itemize}
    \item $\mathbf{p} = (0.5, 0.3, 0.1, 0.1) \implies \text{Partial sums } P = (0.5, 0.8, 0.9, 1.0)$
    \item $\mathbf{q} = (0.4, 0.4, 0.2, 0.0) \implies \text{Partial sums } Q = (0.4, 0.8, 1.0, 1.0)$
\end{itemize}

Taking the component-wise maximum and minimum of the partial sums:
\begin{itemize}
    \item $\max(P, Q) = (0.5, 0.8, 1.0, 1.0)$
    \item $\min(P, Q) = (0.4, 0.8, 0.9, 1.0)$
\end{itemize}

Taking the differences to recover the probability vectors yields valid, decreasingly-ordered vectors:
\begin{itemize}
    \item $\mathbf{u} = \mathbf{p} \vee \mathbf{q} = (0.5, 0.3, 0.2, 0.0)$
    \item $\mathbf{v} = \mathbf{p} \wedge \mathbf{q} = (0.4, 0.4, 0.1, 0.1)$
\end{itemize}

Now, we calculate $Q(\mathbf{x}) = \sum x_i^2$ for each vector:
\begin{itemize}
    \item $Q(\mathbf{p}) = 0.5^2 + 0.3^2 + 0.1^2 + 0.1^2 = 0.36$
    \item $Q(\mathbf{q}) = 0.4^2 + 0.4^2 + 0.2^2 + 0.0^2 = 0.36$
    \item $Q(\mathbf{u}) = 0.5^2 + 0.3^2 + 0.2^2 + 0.0^2 = 0.38$
    \item $Q(\mathbf{v}) = 0.4^2 + 0.4^2 + 0.1^2 + 0.1^2 = 0.34$
\end{itemize}

Finally, we evaluate the Sharma-Mittal entropy $S_{2,3}(\mathbf{x}) = \frac{1 - Q(\mathbf{x})^2}{2}$:
\begin{align*}
    S_{2,3}(\mathbf{p}) + S_{2,3}(\mathbf{q}) &= \frac{1 - 0.36^2}{2} + \frac{1 - 0.36^2}{2} = 1 - 0.1296 = \mathbf{0.8704} \\
    S_{2,3}(\mathbf{u}) + S_{2,3}(\mathbf{v}) &= \frac{1 - 0.38^2}{2} + \frac{1 - 0.34^2}{2} = 1 - \frac{0.1444 + 0.1156}{2} = 1 - 0.1300 = \mathbf{0.8700}
\end{align*}

Because $0.8700 < 0.8704$, we have $S(\mathbf{p} \vee \mathbf{q}) + S(\mathbf{p} \wedge \mathbf{q}) < S(\mathbf{p}) + S(\mathbf{q})$. \\
\textbf{Conclusion:} $S_{2,3}$ is strictly submodular on this pair, proving the function is \textbf{not supermodular}.
}

\subsection{Counterexample 2: $S_{\alpha, \beta}(\mathbf{p})$ is not submodular for $\beta>\alpha$}

In a similar fashion to the previous section, to disprove submodularity, we must exhibit an instance where the submodular property fails, that is, $$S_{2,3}(\mathbf{p} \vee \mathbf{q}) + S_{2,3}(\mathbf{p} \wedge \mathbf{q}) > S_{2,3}(\mathbf{p}) + S_{2,3}(\mathbf{q}).$$

Let $n=4$ and consider the following pair of distributions:
\begin{equation*}
    \textbf{p}=(0.5,0.2,0.2,0.1)\quad\text{and }\quad\textbf{q}=(0.4,0.4,0.15,0.05).
\end{equation*}
Their respective least upper bound and greatest lower bound are:
\begin{equation*}
    \mathbf{p} \vee \mathbf{q}=(0.5,0.3,0.15,0.05)\quad\text{and }\quad\mathbf{p} \wedge \mathbf{q}=(0.4,0.3,0.2,0.1).
\end{equation*}
Evaluating their Sharma-Mittal entropy $S_{\alpha,\beta}$ for $\alpha=2$ and $\beta=3$ yields:
\begin{align*}
    S_{2,3}(\mathbf{p})&=\frac{1}{2}-\frac{1}{2}\left(0.5^2+0.2^2+0.2^2+0.1^2\right)^2={0.4422}\\
    S_{2,3}(\mathbf{q})&=\frac{1}{2}-\frac{1}{2}\left(0.4^2+0.4^2+0.15^2+0.05^2\right)^2={0.4404875}\\
    S_{2,3}(\mathbf{p} \vee \mathbf{q})&=\frac{1}{2}-\frac{1}{2}\left(0.5^2+0.3^2+0.15^2+0.05^2\right)^2={0.4333875}\\
    S_{2,3}(\mathbf{p} \wedge \mathbf{q})&=\frac{1}{2}-\frac{1}{2}\left(0.4^2+0.3^2+0.2^2+0.1^2\right)^2={0.455}
\end{align*}
Consequently, for this choice of probability distributions, the following inequality holds
\begin{equation*}
    S_{2,3}(\mathbf{p} \vee \mathbf{q})+S_{2,3}(\mathbf{p} \wedge \mathbf{q})=0.8883875>0.8826875= S_{2,3}(\mathbf{p})+ S_{2,3}(\mathbf{q}),
\end{equation*}
which demonstrates that $S_{2,3}$ is \textit{not} submodular.

\medskip
Taken together, these two counterexamples demonstrate that for $\beta > \alpha$, and $\alpha>0$, the Sharma-Mittal entropy is \textit{neither supermodular nor submodular} on the majorization lattice.

\remove{
\begin{itemize}
    \item $\mathbf{p} = (0.5, 0.2, 0.2, 0.1) \implies \text{Partial sums } P = (0.5, 0.7, 0.9, 1.0)$
    \item $\mathbf{q} = (0.4, 0.4, 0.15, 0.05) \implies \text{Partial sums } Q = (0.4, 0.8, 0.95, 1.0)$
\end{itemize}

Taking the component-wise maximum and minimum of the partial sums:
\begin{itemize}
    \item $\max(P, Q) = (0.5, 0.8, 0.95, 1.0)$
    \item $\min(P, Q) = (0.4, 0.7, 0.9, 1.0)$
\end{itemize}

Taking the differences yields the respective join and meet:
\begin{itemize}
    \item $\mathbf{u} = \mathbf{p} \vee \mathbf{q} = (0.5, 0.3, 0.15, 0.05)$
    \item $\mathbf{v} = \mathbf{p} \wedge \mathbf{q} = (0.4, 0.3, 0.2, 0.1)$
\end{itemize}

Now, we calculate $Q(\mathbf{x}) = \sum x_i^2$ for each vector:
\begin{itemize}
    \item $Q(\mathbf{p}) = 0.5^2 + 0.2^2 + 0.2^2 + 0.1^2 = 0.34$
    \item $Q(\mathbf{q}) = 0.4^2 + 0.4^2 + 0.15^2 + 0.05^2 = 0.345$
    \item $Q(\mathbf{u}) = 0.5^2 + 0.3^2 + 0.15^2 + 0.05^2 = 0.365$
    \item $Q(\mathbf{v}) = 0.4^2 + 0.3^2 + 0.2^2 + 0.1^2 = 0.30$
\end{itemize}

We evaluate $S_{2,3}(\mathbf{x}) = \frac{1 - Q(\mathbf{x})^2}{2}$:
\begin{align*}
    S_{2,3}(\mathbf{p}) + S_{2,3}(\mathbf{q}) &= \frac{1 - 0.34^2}{2} + \frac{1 - 0.345^2}{2} = 1 - \frac{0.1156 + 0.119025}{2} = {0.8826875} \\
    S_{2,3}(\mathbf{u}) + S_{2,3}(\mathbf{v}) &= \frac{1 - 0.365^2}{2} + \frac{1 - 0.30^2}{2} = 1 - \frac{0.133225 + 0.0900}{2} = {0.8883875}
\end{align*}

Because $0.8883875 > 0.8826875$, we have $S(\mathbf{p} \vee \mathbf{q}) + S(\mathbf{p} \wedge \mathbf{q}) > S(\mathbf{p}) + S(\mathbf{q})$. \\
Therefore,  $S_{2,3}$ is strictly supermodular on this pair, proving the function is {not submodular}.

Because the exact same Sharma-Mittal entropy $S_{2,3}(\mathbf{p})$ acts submodularly on the first pair of vectors and supermodularly on the second pair, we have  proven that for $\beta > 1$, the entropy  loses both properties on the majorization lattice.
}

\end{document}